\documentclass[journal]{IEEEtran}
\usepackage{amsmath,amsfonts,amssymb,bm}
\usepackage{amsthm}
\usepackage{xcolor}
\usepackage{array}
\usepackage{algorithm}
\usepackage{algpseudocode}
\usepackage{url}
\usepackage{graphicx,flushend}
\usepackage[none]{hyphenat}
\newcommand{\argmin}{\mathop{\rm argmin}}

\newcommand{\nmode}[2]{\big[\mathbf{\ten{#1}}\big]_{\left(#2\right)}}
\newcommand{\sumx}[2]{\sum\limits_{#1}^{#2}}
\newcommand{\bb}[1]{\mathbb{#1}}
\usepackage{bm}
\newcommand{\ten}[1]{\bm{\mathcal{#1}}}

\newcommand{\unvec}[1]{\text{unvec}\big( #1 \big)}

\newcommand{\algwrap}[1]{\parbox[t]{\dimexpr\linewidth-\algorithmicindent\relax}{#1}}
\newcommand{\revblue}[1]{{\color{black}#1}}

\newtheorem{proposition}{Proposition}
\newtheorem{corollary}{Corollary}

\begin{document}

\title{TenSIM: Tensor-Based Channel Estimation for MIMO Systems with Stacked Intelligent Metasurfaces}
\author{André L. F. de Almeida and George C. Alexandropoulos}

\maketitle

\begin{abstract}
Stacked intelligent metasurfaces (SIMs) are emerging as a promising architecture for the sixth generation (6G) and beyond of wireless systems, enabling richer electromagnetic-wave manipulation than conventional single-layer metasurfaces. However, realizing these gains requires accurate and scalable channel estimation under the strong inter-layer coupling and multilinear parameter interactions introduced by the stacked programmable metasurface layers. This paper proposes TenSIM, a tensor-based channel-estimation framework for SIM-assisted multiple-input multiple-output (MIMO) systems. By exploiting a structured SIM training protocol, TenSIM derives two parity-dependent observation models: a PARAllel FACtor (PARAFAC) model for odd-layer SIMs and a Tucker model for even-layer SIMs. These formulations decouple the transmitter-SIM and SIM-receiver channel factors while explicitly accounting for inter-layer wave coupling. Based on the resulting tensor models, we develop alternating least squares estimators, establish rank-based identifiability conditions using the associated design matrices, and characterize practical sufficient conditions for full-column-rank training designs, including those involving scaling ambiguities. The proposed framework is validated through extensive numerical experiments and reveals the main operating trade-offs. Our study shows that both TenSIM-PARAFAC and TenSIM-Tucker improve with the signal-to-noise ratio and the training diversity, outperforming unstructured least-squares baselines by exploiting the tensor structure of the SIM cascade. We also show that TenSIM-PARAFAC offers better scalability, lower computational complexity, and stronger robustness to inter-layer spacing, while TenSIM-Tucker can provide more accurate channel reconstruction when sufficient training and strong layer coupling are available. Finally, it is shown that the proposed TenSIM framework remains effective under imperfect or blind SIM training when additional pilot diversity is available. Overall, TenSIM provides a unified and physically interpretable approach for channel estimation in SIM-assisted MIMO systems with explicit identifiability, complexity, and performance trade-offs.
\end{abstract}

\begin{IEEEkeywords}
Stacked intelligent metasurfaces, TenSIM, channel estimation, tensor decomposition, PARAFAC, Tucker model, alternating least squares, identifiability.
\end{IEEEkeywords}

\maketitle

\section{Introduction}
Programmable metasurfaces are emerging as a key enabler for future wireless networks, introducing additional degrees of freedom in the propagation environment when incorporated~\cite {RIS_Book}. Instead of treating multipath as an uncontrollable impairment, metasurface technologies enable software-defined wavefront manipulation through programmable subwavelength elements, allowing phase, amplitude, polarization, and even dispersion engineering with low hardware complexity and power consumption \cite{ElMossallamy2020}. This paradigm is particularly attractive for dense, blockage-prone, and high-frequency deployments, where conventional fully active relaying can be prohibitively expensive in terms of radio-frequency chains and energy budgets~\cite{RIS_use_cases}. In this context, reconfigurable intelligent surfaces (RISs) have received extensive attention due to their potential to enhance coverage, energy efficiency, and link reliability~\cite{Basar_2024}. A substantial body of work has established the communication-theoretic foundations and signal processing methods for RIS-assisted systems, including joint active/passive beamforming~\cite{Huang2019, WuZhang2019, Guo2019}, optimization under practical constraints~\cite{Rahal2022, Bjornson2022,Gavriilidis2025}, localization~\cite{RIS_localization,RISloc_AP_free,RIS_smartcities}, integrating sensing and communications~\cite{RIS_ISAC}, as well as channel estimation~\cite{RIS_survey}.

Tensor modeling has also played an important role in signal processing for wireless systems, owing to its ability to compactly represent and exploit the inherent multilinear structure of communication signals and channels~\cite{almeida2007parafac,confac,favier2014tensor,favier2014overview}. More recently, tensor decompositions have been successfully applied to channel estimation and semi-blind receiver (Rx) design in RIS-assisted wireless communications systems~\cite{gil2021,gil2022, wEI_ris_1,YUAN_ris_1,wEI_ris_2,ardah2021trice,nwalozie2025doubleRIS,Andre_Almeida_2025}.

Most existing studies on programmable metasurfaces focus on single-layer reflective architectures, whose wave-control capability is ultimately limited by the aperture size and one-shot surface interaction. To overcome this limitation, stacked intelligent metasurfaces (SIM)~\cite{SIM_paper}, also referred to as multilayer or volumetric programmable surfaces, have been introduced to achieve richer wave transformations through cascaded interlayer coupling.
By cascading multiple programmable layers of metamaterials separated by sub-wavelength distances, SIMs can emulate deeper linear transformations in the electromagnetic domain~\cite{StylianopoulosOTA2026}, yielding enhanced beamforming flexibility, stronger wave focusing, and improved multiplexing capability compared with conventional single-layer RIS structures. For recent survey and tutorial papers on SIM principles, applications, and open challenges, we refer the reader to~\cite{HLiu2025SIMSurvey,EShi2026SIMSurvey,MLiu2026InWave,Semantic_Alignment,MINN_power_control,SIM_CST}.

\revblue{Recent research on programmable metasurfaces has also expanded along several complementary directions. These include discrete XL-RIS codebook design and near-field multiuser interference management~\cite{XL_RIS_Codebook_2025}, low-complexity meta-fiber-assisted SIM architectures~\cite{MetaFiberSIM2025}, practical varactor-based RIS hardware with continuous phase control~\cite{VaractorRIS2025}, joint precoder and reflector design for RIS-assisted orbital-angular-momentum communications~\cite{RIS_OAM_2025}, visually steered RIS-aided mobile communications~\cite{VisuallySteeredRIS2025}, and joint power allocation and discrete phase-shift optimization for SIM-aided integrated sensing and communications (ISAC)~\cite{SIM_ISAC_Power2025}. In addition, implicit CSI based approaches have been investigated for RIS precoding, including learning-based capacity-driven designs~\cite{CapacityNetRIS2024}.
}

The increased over-the-air computational power of SIM~\cite{SIM_MINN2026} comes with a major signal-processing challenge: channel estimation becomes a highly coupled, multi-factor inverse problem, since the end-to-end response includes the channels from the transmitter (Tx) to the SIM and from the SIM to the Rx, inter-layer coupling matrices, and layer-dependent phase profiles. Direct estimation with naive pilot designs typically leads to high-dimensional bilinear/multilinear ambiguities and poor identifiability. Therefore, channel estimation constitutes a key problem in wireless systems, including SIM. Without an accurate estimate of the cascaded factors, the additional degrees of freedom offered by stacked layers of response-tunable metamaterials cannot be reliably translated into gains in beamforming, multiplexing, and coverage. More specifically, compared with conventional single-layer RISs, SIM introduces stronger inter-layer coupling, higher-dimensional parameter spaces, and more severe bilinear/multilinear ambiguities between propagation channels and phase profiles, which make pilot design, identifiability, and estimator complexity fundamentally more challenging.

An important structural property of programmable metasurface cascades is that the multilinear model induced by the training process depends fundamentally on whether the number of SIM layers is odd or even. This distinction directly determines the appropriate tensor decomposition for channel estimation. When the SIM cascade contains an odd number of layers, a single programmable SIM layer at the center of the cascade, whose phase shifts vary across the training blocks, is enough for channel estimation, while the remaining layers remain fixed. In this case, the inter-layer propagation matrices on the left and right sides of the central layer can be grouped into two effective matrices.
In contrast, when the SIM cascade has an even number of layers, the training process involves two programmable layers positioned near the center of the cascade. These layers interact through the inter-layer propagation matrices associated with the middle inter-layer channel. 
As will be shown later in this paper, in the first case, the received pilot signals follow a parallel factor (PARAFAC) tensor model whose factor matrices correspond to the effective transmit and receive channels and to the programmable phase configuration matrix. In the second case, the resulting model takes the form of a Tucker decomposition with a known core structure determined by the central inter-layer propagation matrix. Such structural differences have a direct implication on the channel estimation for programmable metasurface cascades. The odd-layer SIM architectures naturally lead to PARAFAC-based estimation methods, whereas even-layer SIM architectures are more appropriately handled using Tucker decompositions with known core tensors. 

In this paper, the developed estimation methods exploit precisely this structural distinction, which has not been explicitly exploited in previous SIM channel estimation studies, motivating the proposed tensor-based estimation framework. \revblue{This parity distinction should be interpreted in conjunction with the proposed central-layer training protocol. 
If several noncentral layers were reconfigured simultaneously, the received data could lead to higher-order or coupled tensor models. Under the low-reconfiguration protocol adopted here, however, the physical position of the trainable middle layer(s) determines the algebraic structure. More specifically, one middle layer produces diagonal modulation between two effective sub-cascades, whereas two middle layers leave a central propagation operator between the trained phase profiles and therefore induce a Tucker core.}

To date, only a limited number of works have explicitly addressed the aforementioned problem. The work in \cite{CEWCL2024} provides one of the first SIM-specific channel estimation formulations and develops a structured estimator for cascaded SIM channels. The method of~\cite{SparseAPWCS2024} targeted millimeter-wave SIM settings and exploited channel sparsity to reduce training overhead and improve recovery efficiency. The deep learning approach in \cite{DLCE2025} learns the nonlinear SIM input-output mapping from data, thereby enhancing estimation performance under complex channel conditions. In \cite{NestedCE2025}, a nested-tensor method leveraged multiway algebraic structure to improve factor separation and channel reconstruction in SIM-assisted links. However, the resulting solution is computationally demanding and estimates all inter-layer SIM transfer functions, which is often unnecessary in practice, since inter-layer propagation is typically well-behaved and can be accurately modeled using known geometry/electromagnetic calibration. Despite these advances, existing methods still face limitations, including dependence on restrictive channel assumptions (e.g., strong sparsity), sensitivity to training design, significant data/computation demands in learning-based schemes, as well as residual identifiability and scalability issues in large-scale SIM deployments.

The current state of SIM channel estimation motivates structured training strategies that inject training diversity where it is most informative while keeping SIM hardware reconfiguration overhead manageable. 
In particular, in this paper, we are interested in SIM training protocols that preserve practical controllability while inducing a favorable algebraic structure for estimation. Building on this design, we formulate TenSIM as a unified tensor-based channel-estimation framework for SIM-assisted MIMO communication systems. The key step is a Kronecker-structured decomposition of the SIM cascade that separates the two directional sub-cascades and yields parity-dependent tensor models for the SIM transfer tensor and, consequently, for the received pilot tensor. In the odd-layer case, TenSIM specializes in a structured PARAFAC formulation, while in the even-layer case, it specializes in a Tucker formulation with a known core induced by the central inter-layer propagation. This representation is critical because it enables decoupled recovery of Tx-SIM and SIM-Rx channel factors and provides identifiability guarantees under mild rank/diversity conditions. Therefore, TenSIM jointly addresses physical interpretability, training efficiency, and estimation tractability in SIM-assisted Multiple-Input Multiple-Output (MIMO) communication systems. In summary, the main contributions of this paper are listed as follows:
\begin{itemize}
\item We introduce a novel training strategy for the SIM metasurface architecture that injects diversity at the structure's most informative layers, while keeping hardware reconfiguration overhead manageable.
\item We introduce TenSIM, a unified tensor-based channel-estimation framework comprising two parity-dependent models: a PARAFAC formulation for odd-layer SIMs and a Tucker formulation for even-layer SIMs, both grounded in the physics and structural components of the SIM cascade.
\item We develop practical TenSIM estimators based on alternating least squares (ALS), tailored to each tensor model, enabling structured and decoupled recovery of the Tx-SIM and SIM-Rx channel factors.
\item We establish identifiability conditions in terms of design matrix ranks and provide constructive sufficient conditions linked to full-rank training and the SIM cascade structure.
\item We quantify computational complexity and highlight the trade-off between the proposed lighter TenSIM-PARAFAC estimator and the richer, but more expensive, TenSIM-Tucker estimator.
\item We numerically evaluate the performance of both estimators, highlighting their trade-offs, as well as comparing them with representative baseline methods.
\end{itemize}

The remainder of this paper is organized as follows. Section~II reviews the tensor preliminaries and notation used throughout the paper. Section~III presents the system model and the proposed SIM training protocol. Section~IV develops the TenSIM tensor formulations for odd- and even-layer SIMs and discusses the associated training matrix designs. Section~V formulates the proposed TenSIM channel estimation algorithms, and Section~VI studies identifiability. Section~VII reports the numerical results, while Section~VIII concludes the paper.

\section{Tensor preliminaries}
In this section, we provide the notation and main operators used throughout this paper and an overview of the Tucker decomposition, which will be used to develop the TenSIM channel estimation algorithms.

\subsection{Notation and Key Matrix Properties} \label{Sec:notation}
Scalars are represented as non-bold lower-case letters (e.g., $a$), column vectors as lower-case boldface letters (e.g., $\mathbf{a}$), matrices as upper-case boldface letters (e.g., $\mathbf{A}$), sets as calligraphic lower-case letters (e.g., $\mathcal{A}$), and tensors as calligraphic upper-case boldface letters (e.g., $\ten{A}$). The superscripts $\{\cdot\}^{\text{T}}$, $\{\cdot\}^{\text{*}}$, $\{\cdot\}^{\text{H}}$, and $\{\cdot\}^{+ }$ stand for transpose, conjugate, conjugate transpose, and Moore-Penrose pseudo-inverse operations, respectively. \textcolor{black}{An identity matrix of dimension $K$ is denoted as $\mathbf{I}_{K}$.} The operator $\lVert\cdot\rVert_{F}$ denotes the Frobenius norm of a matrix or tensor. The operator $\text{D}(\mathbf{a})$ constructs a diagonal matrix holding vector $\mathbf{a}$ on the main diagonal, while given a matrix $\mathbf{A} \in \bb{C}^{I \times R}$, $\text{D}_{i}(\mathbf{A})$ defines a diagonal matrix constructed from the $i$-th row of $\mathbf{A}$, for $i \in \{1,\ldots,I\}$.
The operator $\text{vec}\left(\mathbf{A}\right)$ converts $\mathbf{A} \in \mathbb{C}^{I \times R}$ to a column vector $\mathbf{a} \in \mathbb{C}^{IR \times 1}$ by stacking its columns on top of each other, while $\unvec{\mathbf{a}}_{I \times R}$ returns to the matrix $\mathbf{A} \in \bb{C}^{I \times R}$. The symbols $\otimes$ and  $\diamond$ denote the Kronecker product and the Khatri-Rao product (also known as the column-wise Kronecker product), respectively. 
The following identity will be used repeatedly:
\begin{equation}\label{eq:vecABC}
\mathrm{vec}(\mathbf A \mathbf X \mathbf B)
=
(\mathbf B^T \otimes \mathbf A)\,\mathrm{vec}(\mathbf X).
\end{equation}
If $\mathbf X=\mathrm{D}(\mathbf x)$, then it follows:
\begin{equation}
\mathrm{vec}(\mathbf A\,\mathrm{D}(\mathbf x)\,\mathbf B)
=
(\mathbf B^T \diamond \mathbf A)\mathbf x.
\end{equation}

\subsection{Tensors, Slices, and Unfoldings}
Consider a set of matrices $\mathbf{Y}_{k} \in \bb{C}^{I \times J}$, \textcolor{black}{$\forall k = 1,\ldots, K$.} Concatenating all $K$ matrices, we form the third-order tensor $\ten{Y} \triangleq \mathbf{Y}_1 \sqcup_3 \mathbf{Y}_2 \sqcup_3 \ldots \sqcup_3 \mathbf{Y}_{K} \in \bb{C}^{I \times J \times K}$, where $\sqcup_3 $ indicates a concatenation along the third dimension. We can interpret $\mathbf{Y}_{k}$ as the $k$-th frontal slice of $\ten{Y}$, defined as the matrix $\ten{Y}_{..k} = \mathbf{Y}_{k} \in \bb{C}^{I \times J}$. This matrix is constructed by varying the first and second dimensions while keeping the third-dimension index $k$ fixed. The tensor $\ten{Y}$ can be \textit{matricized} by letting one dimension vary along the rows and the remaining two dimensions along the columns. From $\ten{Y}$, we can form three different matrices, referred to as the \textit{$n$-mode unfolding}, $n=1,2,3$, which can be respectively obtained as a function of the frontal slices as
\begin{align}
\label{eq:nmode_1}\nmode{Y}{1} &= [\ten{Y}_{..1},\ldots,\ten{Y}_{..K}] \in \bb{C}^{I \times JK}, \\  
\label{eq:nmode_2}\nmode{Y}{2} &= [\ten{Y}_{..1}^{\text{T}},\ldots,\ten{Y}_{..K}^{\text{T}}] \in \bb{C}^{J \times IK},\\ 
\label{eq:nmode_3}\nmode{Y}{3} &= [\text{vec}(\ten{Y}_{..1}),\ldots,\text{vec}(\ten{Y}_{..K})]^{\text{T}} \in \bb{C}^{K \times IJ}.
\end{align}

For convenience, we can also refer to the unfolding operation as $\nmode{Y}{n}=\textrm{unfold}(\ten Y,n)$, $n=1,2,3$.
The $n$-mode product, denoted as ``$\times_n$", defines the multiplication between a tensor $\ten{Y}$ and a matrix $\mathbf{A}$, leading to a tensor $\ten{Z}$ with compatible dimensions, i.e., $\ten{Z} \triangleq \ten{Y} \times_n \mathbf{A}$. It can be computed by pre-multiplying the $n$-mode unfolding of $\ten{Y}$ by the matrix $\mathbf{A}$, i.e.,  $[\ten{Z}]_{(n)} = \mathbf{A}[\ten{Y}]_{(n)}$. 
For example, the mode-1 product between $\ten{Y}\in\mathbb{C}^{I\times J\times K}$ and $\mathbf{A}\in\mathbb{C}^{L\times I}$ yields $\ten{Z} =\ten{Y} \times_1 \mathbf{A}\in\mathbb{C}^{L\times J\times K}$. It can be computed by $[\ten{Z}]_{(1)} = \mathbf{A}[\ten{Y}]_{(1)}\in\mathbb{C}^{L\times JK}$. We refer the interested reader to \cite{Sidiropoulos2017} for an overview.

\subsection{Tucker Decomposition}\label{Sec:Tucker_decom}
The Tucker decomposition \cite{Kolda2009} defines the concept of multilinear transformation.
For a third-order tensor $\ten{Y} \in \bb{C}^{I \times J \times K}$, this decomposition expresses the tensor as multiple sums of rank-one tensor components, which can be defined using a tensor notation as follows:
\begin{align}
  \label{eq:tenZ_tucker_generic} \ten{Y} &= \ten{G} \times_1 \mathbf{A} \times_2 \mathbf{B} \times_3 \mathbf{C}. 
\end{align}
Using the outer product notation, where $\circ$ denotes the vector outer product, the Tucker decomposition can be equivalently written as
\begin{align}
\ten{Y}
&=
\sumx{r_1=1}{R_1} \sumx{r_2=1}{R_2} \sumx{r_3=1}{R_3}
 g_{r_1,r_2,r_3}
 \mathbf{a}_{r_1} \circ \mathbf{b}_{r_2} \circ \mathbf{c}_{r_3},
\end{align}
where $\mathbf{a}_{r_1}$, $\mathbf{b}_{r_2}$, and $\mathbf{c}_{r_3}$ are the $r_1$-th, $r_2$-th, and $r_3$-th columns of $\mathbf{A}$, $\mathbf{B}$, and $\mathbf{C}$, respectively, and $g_{r_1,r_2,r_3}\triangleq[\ten{G}]_{r_1,r_2,r_3}$ denotes the typical element of $\ten{G}$.

The $3$-mode (frontal) slices $\ten{Y}_{..k} \in \mathbb{C}^{I \times J}$, $\forall k=1, \ldots, K$, can be expressed as follows:
\begin{align}
   \label{eq:tucker_2_slice} \ten{Y}_{..k} &= \mathbf{A}\big(\ten{G} \times_3 \mathbf{c}_k^{\text{T}}\big)\mathbf{B}^{\text{T}} \in \bb{C}^{I  \times J},
\end{align}
where $\mathbf{c}_k^{\text{T}}$ denotes the $k$-th row of $\mathbf{C}$.
By properly stacking these frontal slices according to equations (\ref{eq:nmode_1})--(\ref{eq:nmode_3}), the  matrix unfoldings of the Tucker decomposition can be factorized as
\begin{align}
     \label{eq:tucker_2_u1} \nmode{Y}{1} &= \mathbf{A}\nmode{G}{1}\big(\mathbf{C} \otimes \mathbf{B} \big)^{\text{T}} \in \bb{C}^{I \times JK}, \\
    \label{eq:tucker_2_u2} \nmode{Y}{2} &= \mathbf{B}\nmode{G}{2}\big(\mathbf{C} \otimes \mathbf{A} \big)^{\text{T}} \in \bb{C}^{J \times IK}, \\
    \label{eq:tucker_2_u3} \nmode{Y}{3} &= \mathbf{C}\nmode{G}{3}\big(\mathbf{B} \otimes \mathbf{A} \big)^{\text{T}} \in \bb{C}^{K \times IJ}.
\end{align}

\subsection{PARAFAC Decomposition}\label{Sec:parafac_decom}
The well-known PARAFAC decomposition, also known as the canonical polyadic decomposition (CPD)  \cite{Harshman}, \cite{Sidiropoulos2017}, is a special case of the Tucker decomposition, in which $R_1=R_2=R_3=R$, while the core tensor reduces to an identity tensor. In $n$-mode product form, for a third-order tensor $\ten{Y}\in\bb{C}^{I\times J\times K}$, the following is deduced:
\begin{equation}
\ten{Y} = \ten{I}_{3,R} \times_1 \mathbf{A} \times_2 \mathbf{B} \times_3 \mathbf{C},
\end{equation}
where $\mathbf{A}=[\mathbf{a}_1,\ldots,\mathbf{a}_R]\in\bb{C}^{I\times R}$, $\mathbf{B}=[\mathbf{b}_1,\ldots,\mathbf{b}_R]\in\bb{C}^{J\times R}$, and $\mathbf{C}=[\mathbf{c}_1,\ldots,\mathbf{c}_R]\in\bb{C}^{K\times R}$ are the factor matrices, and $R$ is the PARAFAC rank (number of rank-one components). Equivalently, it holds:
\begin{equation}
\ten{Y}=\sum_{r=1}^{R}\mathbf{a}_r\circ\mathbf{b}_r\circ\mathbf{c}_r,
\end{equation}
which shows that the PARAFAC decomposition expresses a higher-order tensor as a sum of triplets, each of which is a rank-one tensor (an outer product of three vectors). Adopting a similar reasoning as done before, the matrix unfoldings of a three-way PARAFAC tensor admit the following factorizations:
\begin{align}
\label{eq:parafac_u1}
\nmode{Y}{1} &= \mathbf{A}\left(\mathbf{C}\diamond\mathbf{B}\right)^{\text{T}} \in \bb{C}^{I\times JK},\\
\label{eq:parafac_u2}
\nmode{Y}{2} &= \mathbf{B}\left(\mathbf{C}\diamond\mathbf{A}\right)^{\text{T}} \in \bb{C}^{J\times IK},\\
\label{eq:parafac_u3}
\nmode{Y}{3} &= \mathbf{C}\left(\mathbf{B}\diamond\mathbf{A}\right)^{\text{T}} \in \bb{C}^{K\times IJ}.
\end{align}

The Tucker decomposition is generally non-unique due to rotational freedom between the factor matrices and the core tensor. Specifically, each factor matrix can be post-multiplied by a nonsingular transformation, provided that the inverse transformation is absorbed into the corresponding mode of the core tensor, yielding the same reconstructed tensor \cite{Kolda2009}. When the core tensor $\ten{G}$ is known, however, the factor matrices can be uniquely identified up to trivial scaling ambiguities under suitable conditions \cite{favier2014overview}. By contrast, the PARAFAC decomposition is essentially unique (up to permutation and scaling ambiguities) under mild identifiability conditions. For more details on Tucker and PARAFAC models, we refer the reader to \cite{Sidiropoulos2017,Kolda2009}.

The latter preliminaries will be instrumental in the sequel, where the SIM-assisted MIMO communication model is reformulated and analyzed from a tensor perspective.

\section{System Model and Channel Training} 
We consider a SIM-assisted MIMO system in which a Tx with $M_T$ antennas communicates with a Rx with $M_R$ antennas via a SIM, as depicted in Fig.~\ref{fig:system_model}. The SIM is composed of $L$ parallel metasurface layers, each containing $N$ meta-atoms (elements), and each layer is controlled by a programmable phase-shift matrix. During training, the direct Tx-Rx path is assumed to be negligible (or removed by calibration), so the dominant propagation occurs through the cascaded Tx-SIM-Rx link. Let $\mathcal M\triangleq\{1,\dots,M_T\}$, $\mathcal R\triangleq\{1,\dots,M_R\}$, $\mathcal L\triangleq\{1,\dots,L\}$, and $\mathcal N\triangleq\{1,\dots,N\}$ denote the transmit-antenna, receive-antenna, layer, and meta-atom index sets. The SIM geometry follows a stacked architecture: adjacent layers are separated by a distance $d_{\mathrm{layer}}$, meta-atoms are arranged with inter-element spacing $d_{\mathrm{atom}}$, and all layers are assumed to be co-centered and parallel. The distance between the Tx array and the first SIM layer is denoted by $d_{\mathrm{tx}}$, and the distance between the last SIM layer and the Rx array is represented by $d_{\mathrm{rx}}$.

\revblue{The adopted system model is intended for controlled or semi-controlled SIM-assisted links in which the SIM geometry and inter-layer coupling can be calibrated. Representative deployments include indoor hotspot links, holographic MIMO, access-point-integrated SIM panels, and environmental SIM infrastructure to assist with blocked links~\cite{SIM_CST,HMIMO}. 
In such settings, the SIM may be integrated with the Tx, placed as an intermediate passive or semipassive structure between the Tx and Rx, or installed on room walls or ceilings as intelligent infrastructure for controlled wave propagation.}

For each $\ell$-th SIM layer, the transmission coefficient matrix is given as follows:
\begin{equation}
\begin{aligned}
\mathbf \Phi_{\ell}&\triangleq\mathrm{Diag}(\boldsymbol\phi_{\ell})\in\mathbb C^{N\times N},\\
\boldsymbol\phi_{\ell} &\triangleq \revblue{\left[e^{j\theta_{\ell,1}},\dots,e^{j\theta_{\ell,N}}\right]^T}\in\mathbb C^{N\times 1},
\end{aligned}
\end{equation}
where $\theta_{\ell,n}\in[0,2\pi)$ and $|e^{j\theta_{\ell,n}}|=1$ for ideal lossless meta-atoms (or amplitude-limited values under practical hardware constraints). The EM coupling between two adjacent layers is modeled by $\mathbf W_\ell\in\mathbb C^{N\times N}$, $\forall \ell=2,\dots,L$. Following the Rayleigh--Sommerfeld diffraction-based modeling, each $(n,n')$-th entry of $\mathbf W_\ell$ can be expressed as follows~\cite{SIM_paper}:
\begin{equation}
\label{eq:rayleigh_sommerfeld}
\begin{aligned}
{[\mathbf W_\ell]_{n,n'}}
&\triangleq
\frac{d_x d_y d_{\mathrm{layer}}}{(d_{\ell,n,n'})^2}
\left(\frac{1}{2\pi d_{\ell,n,n'}}-\frac{1}{j\lambda}\right)\\
&\quad\times
\exp\!\left(\frac{j2\pi d_{\ell,n,n'}}{\lambda}\right),\,\,n,n'=1,\ldots,N.
\end{aligned}
\end{equation}
where $\lambda$ is the wavelength and $d_{\ell,n,n'}$ is the distance between meta-atom $n'$ of layer $\ell-1$ and meta-atom $n$ of layer $\ell$. Accordingly, the SIM cascade is the ordered matrix product
\begin{equation}\label{eq:SIMcascade}
\begin{aligned}
\mathbf S &\triangleq \mathbf \Phi_{L}\mathbf W_L\mathbf \Phi_{L-1}\mathbf W_{L-1}\cdots\mathbf \Phi_{2}\mathbf W_2\mathbf \Phi_{1}.
\end{aligned}
\end{equation}
Note that this structure is analogous to a deep linear network in the wave domain~\cite{Semantic_Alignment, SIM_MINN2026}: each programmable layer applies element-wise complex weighting, while each $\mathbf W_\ell$ mixes waves between adjacent layers.

\revblue{It is worth noting that the input-output signal model discussed here is formulated for a pilot-aided training phase in a single transmission direction. The estimated factors are therefore used to reconstruct the SIM-assisted cascaded channel for that directed link. If the system operates in a calibrated time-division duplexing (TDD) mode, the reciprocal-link CSI can be obtained by reversing the order of the SIM cascade and transposing the estimated physical or effective factors. In the odd-layer case, this mapping applies to the effective sub-cascade factors, since the individual fixed inter-layer transfer matrices are absorbed into $\mathbf Z_G$ and $\mathbf Z_H$. In the even-layer case, when the stated identifiability conditions hold, the physical channel factors $\mathbf G$ and $\mathbf H$ can be mapped to the reciprocal link, up to a remaining scalar ambiguity that cancels in cascaded channel reconstruction.}

\begin{figure}[t]
\centering
\includegraphics[width=\columnwidth]{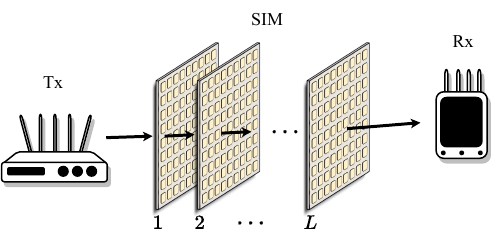}
\caption{The considered SIM-assisted MIMO communication system comprising a Tx with $M_T$ antennas, an Rx with $M_R$ antennas, and a SIM with $L$ metasurface layers, each consisting of $N$ response-tunable metamaterials.}
\label{fig:system_model}
\end{figure}

\subsection{Proposed Training Protocol}
We assume a pilot-based training stage under a two-timescale protocol. The pilot structure comprises $K$ blocks of $T$ time slots, with the pilot sequences repeated across the $K$ blocks. The SIM phase configuration is fixed within a time slot and varies from block to block. Adopting a quasi-static block-fading model, the physical propagation channels remain invariant within the training interval. In particular, $\mathbf G$, $\mathbf H$, and the inter-layer coupling matrices $\{\mathbf W_\ell\}_{\ell=2}^L$ are treated as constant during estimation, while the Rx has full knowledge of the pilot matrix $\mathbf X$. For each pilot block $k\in\{1,\dots,K\}$, the received signal is modeled as follows:
\begin{equation}
\mathbf Y_k \revblue{=} \mathbf G\,\mathbf S_k\,\mathbf H\,\mathbf X + \mathbf B_k,
\end{equation}
where $\mathbf Y_k\in\mathbb C^{M_R\times T}$ is the received pilot matrix, $\mathbf X\in\mathbb C^{M_T\times T}$ is the known pilot matrix, $\mathbf H\in\mathbb C^{N\times M_T}$ is the Tx-to-first-layer channel, $\mathbf G\in\mathbb C^{M_R\times N}$ is the last-layer-to-Rx channel, and $\mathbf B_k\in\mathbb C^{M_R\times T}$ is the additive white Gaussian noise term.
After matched filtering with $\mathbf X^H$, we define:
\begin{equation}\label{eq:Zk}
\mathbf Z_k \triangleq \mathbf Y_k\mathbf X^H
=
\mathbf G\,\mathbf S_k\,\mathbf H + \widetilde{\mathbf B}_k,
\,\,
\widetilde{\mathbf B}_k=\mathbf B_k\mathbf X^H.
\end{equation}
Thus, vectorizing each block gives:
\begin{equation}
\mathbf z_k\triangleq \mathrm{vec}(\mathbf Z_k)
=
(\mathbf H^T\!\otimes\!\mathbf G)\,\mathrm{vec}(\mathbf S_k)+\mathbf b_k,
\,\,
\mathbf b_k\triangleq \mathrm{vec}(\widetilde{\mathbf B}_k).\nonumber
\end{equation}
Collecting all blocks at the Rx yields:
\begin{equation}
\mathbf Z \triangleq [\mathbf z_1,\dots,\mathbf z_K]
=
(\mathbf H^T\!\otimes\!\mathbf G)\,\mathbf S^T + \mathbf B,\label{eq:eqsimcascade}
\end{equation}
where $\mathbf S^T\triangleq[\mathrm{vec}(\mathbf S_1),\dots,\mathrm{vec}(\mathbf S_K)]$ and
$\mathbf B\triangleq[\mathbf b_1,\dots,\mathbf b_K]$.

\revblue{We now summarize the main assumptions used by the proposed estimators as follows. The pilot matrix $\mathbf X$ is known at the Rx and is chosen so that the matched-filtered observations in \eqref{eq:Zk} are available. The Tx-SIM channel $\mathbf H$, the SIM-Rx channel $\mathbf G$, and the inter-layer coupling matrices $\{\mathbf W_\ell\}_{\ell=2}^{L}$ are quasi-static during the training interval. The direct Tx-Rx path is negligible or has been removed through calibration. The nominal SIM training configurations are known to the Rx, while the blind/imperfect-training variant in Section~\ref{sec:blind-SIM-variant} relaxes this assumption by jointly updating the training factor. Also, the inter-layer coupling matrices of the SIM are assumed known from geometry-based modeling or tailored electromagnetic calibration. 
Without loss of generality, Rayleigh fading channels and ideal unit-modulus phase shifts are adopted in our simulations.}

\subsection{Least Squares (LS) estimator}\label{subsec:3b}
The equivalent input-output SIM cascade model in (\ref{eq:eqsimcascade}) also suggests a simple unstructured least-squares (LS) reference estimator. Let us define the composite channel matrix
\begin{equation}
\mathbf T \triangleq \mathbf H^T\!\otimes\!\mathbf G \in \mathbb C^{M_RM_T\times N^2},
\end{equation}
so that $\mathbf Z=\mathbf T\mathbf S^T+\mathbf B$. If the SIM cascade training matrix $\mathbf S^T$ is known from the programmed phase configurations and has sufficient rank, $\mathbf T$ can be estimated directly as follows:
\begin{equation}
\widehat{\mathbf T}_{\mathrm{LS}}
=
\argmin_{\mathbf T}\left\|\mathbf Z-\mathbf T\mathbf S^T\right\|_F^2
=
\mathbf Z\left(\mathbf S^T\right)^+.\label{eq:aggLS}
\end{equation}
This estimator treats $\mathbf H^T\!\otimes\!\mathbf G$ as a single unconstrained matrix and, therefore, does not exploit the Kronecker structure of the composite channel factor. It also uses only the aggregate training matrix $\mathbf S^T$, and does not exploit the odd/even SIM cascade structures developed in the proposed TenSIM framework. For this reason, the unstructured LS estimator will later serve as a reference baseline in the numerical comparisons.

A major limitation of this aggregate LS baseline is its training overhead. Since $\mathbf T$ is estimated as a fully unstructured $M_RM_T\times N^2$ matrix, the training matrix $\mathbf S^T\in\mathbb C^{N^2\times K}$ must have full row rank for the LS estimate to be identifiable in the noiseless case. This requires at least $K\geq N^2$ pilot blocks, which quickly becomes impractical for large SIM apertures. When $K<N^2$, the aggregate LS problem is underdetermined unless additional regularization or prior information is introduced. In contrast, as will be clear later, the TenSIM estimators exploit the PARAFAC/Tucker structure induced by the SIM cascade and the central-layer training design, thereby working with a much smaller effective training dimension $K$ rather than estimating $\mathbf T$ as a completely unstructured matrix.

\begin{figure*}[!t]
\centering
\includegraphics[width=1.0\textwidth]{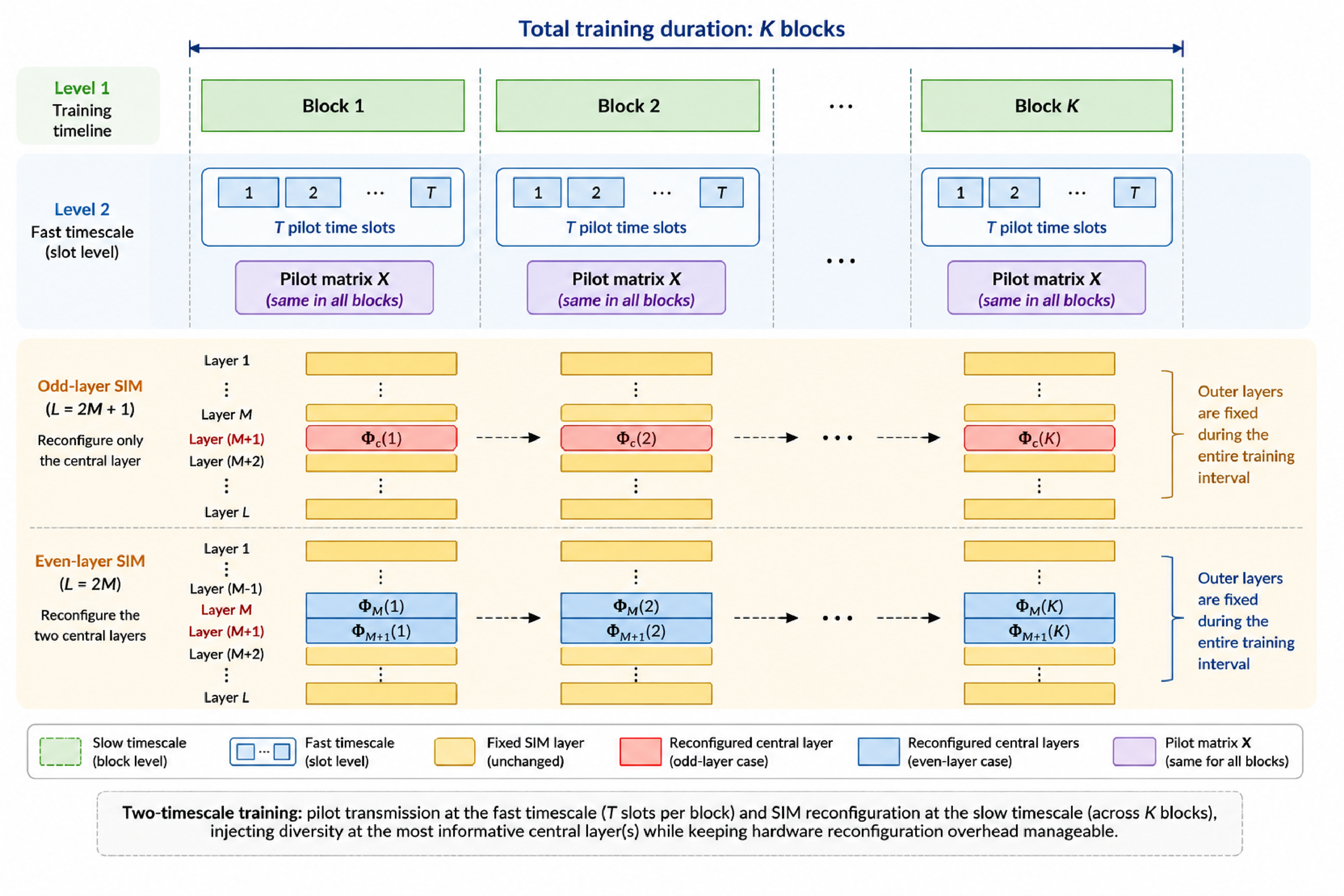}
\caption{Proposed two-timescale SIM training protocol.}
\label{fig:training_protocol}
\end{figure*}

\subsection{Proposed SIM Training Strategies} 
The proposed two-timescale SIM training protocol is illustrated in Fig.~\ref{fig:training_protocol}. From a practical standpoint, reconfiguring all $L$ metasurface layers for each pilot block is undesirable. Such a strategy increases control signaling overhead~\cite{RIS_Control}, requires tighter synchronization between the controller and transceivers, increases calibration complexity, and can exceed hardware switching-speed limits in systems with large SIM. Therefore, the training protocol should introduce sufficient diversity to ensure channel identifiability, while minimizing the number of coefficients that need to be updated. This leads to a reduced-complexity strategy in which only the most informative central layers are varied across the $K$ pilot blocks, while the remaining layers are kept fixed. We consider two cases: an odd number of layers ($L=2M+1$) and an even number of layers ($L=2M$). 
\begin{itemize}
\item For $L=2M+1$: training is assigned to the middle (center) layer $M+1$, which means that only the phase shifts of the middle layer are updated during training, whereas the remaining $2M$ layers ($1 \rightarrow M$) and  ($M+2 \rightarrow 2M+1$)  are kept fixed.
\item For $L=2M$: training is assigned to the two middle layers $M$ and $M+1$, which means that the phase shifts of these two middle layers are jointly updated during training, whereas the remaining $2M-2$ layers ($1 \rightarrow M-1$) and  ($M+2 \rightarrow 2M$) are kept fixed.
\end{itemize}

\revblue{\noindent\textit{Remark 1}: This two-timescale design should be viewed as a structured and hardware-aware training rule rather than as a globally optimal design for every impairment model. It is expected that phase noise or phase quantization may perturb the intended training factors, while switching delay and layer-dependent reconfiguration latency limit the number of iterations of the proposed ALS algorithm, which motivates varying only the central layer(s). 
The imperfect/blind training extension in Section~\ref{sec:blind-SIM-variant} partly addresses such perturbations by jointly updating the unknown training factor, at the cost of additional pilot diversity and increased sensitivity to initialization. It is noted that robust codebook optimization under hardware constraints is outside the scope of this work. }

\section{Tensor-Based Channel Estimation Formulation}
In this section, we first recast the received pilot observations into a third-order tensor model by stacking the matched-filtered blocks along the training-block index. This observation model captures the multilinear structure seen at the receiver and provides a unified starting point for the subsequent developments. Based on this tensorized signal representation, we derive the SIM cascade tensor structures induced by the two training strategies: the odd-layer case ($L=2M+1$) and the even-layer case ($L=2M$).

From expression~(\ref{eq:Zk}), we define the following third-order tensors:
\begin{equation}
\ten Z(:,:,k)\triangleq\mathbf Z_k,\quad
\ten S(:,:,k)\triangleq\mathbf S_k,\quad
\ten B(:,:,k)\triangleq\widetilde{\mathbf B}_k,
\end{equation}
so that $\ten Z\in\mathbb C^{M_R\times M_T\times K}$ and
$\ten S\in\mathbb C^{N\times N\times K}$. Then the received signal model can be equivalently written as follows:
\begin{equation}\label{Ztensor}
\ten Z
=
\ten S\times_1 \mathbf G \times_2 \mathbf H^T + \ten B,
\end{equation}
which is a Tucker-$(2,3)$ form, i.e., two loading matrices ($\mathbf {G}$ and $\mathbf {H} ^T$) act on modes 1 and 2, while mode 3 is preserved. 


\subsection{SIM with Odd Number of Layers ($L=2M+1$)}
For the odd-layer architecture, we update only the single central layer during training, while all remaining SIM layers are kept fixed. Hence, for each pilot block $k\in\{1,\ldots,K\}$, the following holds:
\begin{equation}
\boldsymbol\phi_{\ell,k}=\boldsymbol\phi_{\ell},\;\ell\neq M+1,
\qquad
\boldsymbol\phi_{M+1,k}\ \text{varies with }k.
\end{equation}
From (\ref{eq:SIMcascade}), we have:
\begin{equation}\label{eq:SIMcascadek}
\begin{aligned}
\mathbf S_k &= \mathbf \Phi_{L}\mathbf W_L\mathbf\cdots \mathbf \Phi_{M+1,k}\mathbf W_{M+1}\cdots \mathbf \Phi_{2}\mathbf W_2\mathbf \Phi_{1},
\end{aligned}
\end{equation}
where only layer $M+1$ depends on the training block index $k$. Taking the $\textrm{vec}(\cdot)$ operator, applying multiple times property (\ref{eq:vecABC}), and concatenating for all $K$ blocks, we obtain:
\begin{equation}\label{eq:STodd}
\begin{aligned}
\mathbf S^T
&=
\left[
\prod_{m=1}^{M}
\big(\mathbf \Phi_m\otimes \mathbf \Phi_{2M-m+2}\big)
\big(\mathbf W_{m+1}^T\otimes \mathbf W_{2M-m+2}\big)
\right]\\
&\quad\times \mathbf S_{M+1}.
\end{aligned}
\end{equation}
with $\mathbf S_{M+1}
\triangleq
[\mathrm{vec}(\mathbf \Phi_{M+1,1}),\dots,\mathrm{vec}(\mathbf \Phi_{M+1,K})] \in\mathbb C^{N^2\times K}$.
Note that expression~(\ref{eq:STodd}) represents the SIM cascade pairing from the two sides (first and last layers) toward the center.
Let $\mathbf P\in\mathbb R^{N^2\times N}$ be the diagonal-selection matrix such that the following holds
\begin{equation}
\mathrm{vec}(\mathbf \Phi_{M+1,k})=\mathbf P\,\boldsymbol\phi_{M+1,k},
\qquad
\mathbf S_{M+1}=\mathbf P\,\mathbf \Phi,
\end{equation}
where the following matrix:
\begin{equation}
\mathbf \Phi\triangleq\mathbf \Phi_{M+1}=[\boldsymbol\phi_{M+1,1},\dots,\boldsymbol\phi_{M+1,K}]\in\mathbb C^{N\times K}
\end{equation}
denotes the middle-layer SIM training matrix. Collecting the fixed left/right factors in (\ref{eq:STodd}), by repeatedly using the mixed-product property of the Kronecker product and using the diagonal-selection matrix $\mathbf P$ together with the identity $(\mathbf B\!\otimes\!\mathbf A)\mathbf P=\mathbf B\!\diamond\!\mathbf A$, we obtain
\begin{equation}\label{eq:ST0}
\begin{aligned}
\mathbf S^T
&=
\mathbf T
\left[
\big(\mathbf \Phi_M\mathbf W_{M+1}^T\big)
\diamond
\big(\mathbf \Phi_{M+2}\mathbf W_{M+2}\big)
\right] \mathbf \Phi^T,
\end{aligned}
\end{equation}
where
\begin{equation}
\label{eq:ST}
\begin{aligned}
\mathbf T
&\triangleq\prod_{m=1}^{M-1}
\Big[
\big(\mathbf \Phi_m\mathbf W_{m+1}^T\big)\\
&\qquad\otimes \big(\mathbf \Phi_{2M-m+2}\mathbf W_{2M-m+2}\big)
\Big]
\in\mathbb C^{N^2\times N^2}.
\end{aligned}
\end{equation}
Expressions (\ref{eq:ST0}) and (\ref{eq:ST}) make explicit that all fixed layers are grouped inside a single Kronecker factor $\mathbf T$, while the two middle varying layers appear separately inside the Khatri--Rao product. To further separate the two SIM cascades, we use the following Kronecker product property:
\begin{equation}\label{eq:propertyprodkron}
\prod_{m=1}^{R}\left(\mathbf A_m\otimes\mathbf B_m\right)
=
\left(\prod_{m=1}^{R}\mathbf A_m\right)
\otimes
\left(\prod_{m=1}^{R}\mathbf B_m\right),
\end{equation}
where the correspondences are $\mathbf A_m\triangleq \mathbf \Phi_m\mathbf W_{m+1}^T$ and $\mathbf B_m\triangleq \mathbf \Phi_{2M-m+2}\mathbf W_{2M-m+2}$. This allows us to group the left and right SIM factors and write $\mathbf T=\mathbf T_{\textrm{L}}\otimes \mathbf T_{\textrm{R}}$, where
\begin{align}
\mathbf T_{\textrm{L}} &\triangleq  \prod_{m=1}^{M-1}\mathbf \Phi_m\mathbf W_{m+1}^T \in\mathbb C^{N\times N}\\
&=
\mathbf \Phi_1\mathbf W_2^T\,\mathbf \Phi_2\mathbf W_3^T\cdots\mathbf \Phi_{M-1}\mathbf W_M^T,\nonumber\\
\mathbf T_{\textrm{R}} &\triangleq  \prod_{m=1}^{M-1} \mathbf \Phi_{2M-m+2}\mathbf W_{2M-m+2}\in\mathbb C^{N\times N}\\
&= \mathbf \Phi_{2M+1}\mathbf W_{2M+1}\,\cdots\mathbf \Phi_{M+3}\mathbf W_{M+3}.\nonumber
\end{align}
Physically, $\mathbf T_{\textrm{L}}$ aggregates the wave transformations along the first half of the SIM cascade (from the Tx side up to the neighborhood of the central layers), while $\mathbf T_{\textrm{R}}$ aggregates the complementary transformations along the second half (from the Rx side back toward the center). In other words, $\mathbf T_{\textrm{L}}$ and $\mathbf T_{\textrm{R}}$ are two directional sub-cascades that collect all fixed-layer effects outside the block-varying middle layers. This Kronecker-based split of the full cascade into the two structured parts $\mathbf T_{\textrm{L}}$ and $\mathbf T_{\textrm{R}}$ induces a consistent PARAFAC structure for the received pilot tensor, which is then exploited to perform decoupled estimation of the Tx-SIM and SIM-Rx channel factors.

Since $\mathbf T=\mathbf T_{\textrm{L}}\otimes \mathbf T_{\textrm{R}}$, we can rewrite (\ref{eq:ST0}) as follows:
\begin{equation}\label{eq:STcompact}
\mathbf S^T
= (\mathbf S_{\textrm{L}} \diamond \mathbf S_{\textrm{R}})\mathbf \Phi^T,
\end{equation}
where $\mathbf S_{\textrm{L}}\triangleq\mathbf{T}_{\textrm{L}}\mathbf \Phi_M\mathbf W_{M+1}^T \in \mathbb{C}^{N \times N}$ and $\mathbf S_{\textrm{R}}\triangleq\mathbf T_{\textrm{R}}\mathbf \Phi_{M+2}\mathbf W_{M+2} \in \mathbb{C}^{N \times N}$.
We can interpret $\mathbf S^T$ as the transposed mode-3 unfolding of a tensor $\ten S \in\mathbb C^{N\times N\times K}$ that follows a rank-$N$ PARAFAC decomposition given by:
\begin{equation}\label{eq:Stensor0}
\ten S
=
\ten I_{3,N}
\times_1 \mathbf S_{\textrm{R}}
\times_2
 \mathbf S_{\textrm{L}}
\times_3
\mathbf \Phi_{}.
\end{equation}


Finally, substituting the PARAFAC decomposition of the SIM cascade in (\ref{eq:STcompact}) into the received signal tensor model (\ref{Ztensor}), and using the property of the $n$-mode product $\ten X \times_n \mathbf A \times_n \mathbf B = \ten X \times_n (\mathbf B_n\mathbf A_n)$, $n=1$ and $2$, we obtain:
\begin{equation}\label{eq:ZtensorCP}
\ten Z
=
\ten I_{3,N}
\times_1 \mathbf G\mathbf S_{\textrm{R}}
\times_2 \mathbf H^T \mathbf S_{\textrm{L}}
\times_3 \mathbf \Phi_{} + \ten{B},
\end{equation}
or, equivalently,
\begin{equation}\label{eq:oddlayer_parafac_model}
\ten Z
=
\ten I_{3,N}
\times_1 \mathbf Z_G
\times_2 \mathbf Z_H
\times_3 \mathbf \Phi_{} + \ten B
\end{equation}
where
\begin{eqnarray}\label{eq:ZgZh_odd}
\mathbf Z_G &\triangleq& \mathbf G\mathbf T_{\textrm{R}}\mathbf \Phi_{M+2}\mathbf W_{M+2},\nonumber\\
\mathbf Z_H &\triangleq& \mathbf H^T\mathbf T_{\textrm{L}}\mathbf \Phi_M\mathbf W_{M+1}^T.
\end{eqnarray}
\revblue{Thus, in the odd-layer PARAFAC model, the directly identifiable channel-related quantities are the effective factors $\mathbf Z_G$ and $\mathbf Z_H$. These factors absorb the fixed SIM sub-cascades surrounding the trained middle layer. Consequently, the physical matrices $\mathbf G$ and $\mathbf H$ cannot be recovered separately from this model, unless additional calibration or side information about the fixed sub-cascades is available. Nevertheless, the end-to-end cascaded channel can be reconstructed from $\mathbf Z_G$, $\mathbf Z_H$, and the tested middle-layer phase profile.}



For the odd-layer branch of TenSIM, the proposed SIM training structure induces the PARAFAC model in (\ref{eq:ZtensorCP}). Hence, SIM channel estimation can be rigorously cast as the identification of a structured PARAFAC model from the received pilot tensor, with factors $\mathbf Z_G$, $\mathbf Z_H$, and $\mathbf \Phi_{}$. This formulation enables decoupled estimation of the cascaded Tx-SIM and SIM-Rx channel components, while preserving identifiability under mild conditions via the essential uniqueness property of PARAFAC. This is a distinguishing feature of the proposed TenSIM approach that has not yet been addressed in prior relevant work. 

\subsection{SIM with Even Number of Layers ($L=2M$)}
For the even-layer architecture, we adopt the same two-timescale training logic as in Section~3.1, adapted to the absence of a single central layer. Specifically, when $L=2M$, the two middle layers are numbered as $M$ and $M+1$, and only these two layers are reconfigured across pilot blocks. All remaining layers are kept fixed during the full training interval. Similar to the odd-layer case, this design preserves blockwise diversity in the cascaded SIM response while keeping the number of time-varying phase shifts controlled. Hence, for each pilot block $k\in\{1,\ldots,K\}$, we have:
\begin{eqnarray}
&&\boldsymbol\phi_{\ell,k}=\boldsymbol\phi_\ell,\quad \ell\notin\{M,M+1\},\nonumber\\
&&\boldsymbol\phi_{M,k},\;\boldsymbol\phi_{M+1,k}\ \text{vary with }k.
\end{eqnarray}
Then, the block-dependent cascade can be written as follows:
\begin{align}\label{eq:SIMcascadek2}
\mathbf S_k =&\mathbf \Phi_{L}\mathbf W_L\mathbf \cdots \mathbf \Phi_{M+1,k}\mathbf W_{M+1}\mathbf \Phi_{M,k} \cdots\mathbf W_2\mathbf \Phi_{1},
\end{align}
where the two central layers $M$ and $M+1$ are used for training, depending on the block index $k$. Note that since $L=2M$, the center of the SIM cascade corresponds to the central propagation matrix $\mathbf {W}_{M+1}$. Applying repeatedly property (\ref{eq:vecABC}) to the SIM cascade (\ref{eq:SIMcascadek2}) yields
\begin{equation}
\mathbf s_k \triangleq \mathrm{vec}(\mathbf S_k)
=
\mathbf T
\left(\boldsymbol{\Phi}_{M,k} \otimes \boldsymbol{\Phi}_{M+1,k}\right)
\mathrm{vec}(\mathbf W_{M+1}),
\end{equation}
where the matrix $\mathbf T$ now collects the contributions of all outer SIM layers, and can be written as follows:
\begin{equation}\label{eq:Teven}
\mathbf T =
\prod_{m=1}^{M}
\big[
(\bm{\Phi}_m \otimes \bm{\Phi}_{2M-m+1})
(\mathbf W_{m+1}^T \otimes \mathbf W_{2M-m+1})
\big].
\end{equation}
Noting that
\begin{equation}
\begin{aligned}
\boldsymbol{\Phi}_{M,k} \otimes \boldsymbol{\Phi}_{M+1,k}
&=
\textrm{Diag}(\bm{\phi}_{M,k}) \otimes \textrm{Diag}(\bm{\phi}_{M+1,k})\\
&=
\textrm{Diag}(\bm{\phi}_{M,k} \otimes \bm{\phi}_{M+1,k}),
\end{aligned}
\end{equation}
and defining $\bar{\bm{\phi}}_k=\bm{\phi}_{M,k} \otimes \bm{\phi}_{M+1,k}$
as the combined phase shift vector for the two central layers, the SIM cascade becomes:
\begin{equation}
\mathbf s_k
=
\mathbf T
\textrm{Diag}(\bar{\boldsymbol{\phi}}_k)
\mathrm{vec}(\mathbf W_{M+1}).
\end{equation}
Finally, stacking the $K$ blocks yields:
\begin{equation}
\mathbf S^T
\!\!=\!\!
\mathbf T
\left[
\textrm{Diag}(\bar{\boldsymbol{\phi}}_1)\mathrm{vec}(\mathbf W_{M+1}),
\ldots,
\textrm{Diag}(\bar{\boldsymbol{\phi}}_K)\mathrm{vec}(\mathbf W_{M+1})
\right]\!\!.
\end{equation}
or, equivalently,
\begin{equation}\label{eq:STeven}
\mathbf S^T
=
\mathbf T
\textrm{Diag}(\mathrm{vec}(\mathbf W_{M+1}))
\bar{\boldsymbol{\Phi}}^T,
\end{equation}
where the following matrix:
\begin{equation}
\bar{\boldsymbol{\Phi}}= [\bar{\boldsymbol{\phi}}_1,\dots,\bar{\boldsymbol{\phi}}_K]^T= (\boldsymbol{\Phi}_{M} \diamond \boldsymbol{\Phi}_{M+1})^T 
\in \mathbb{C}^{K\times N^2}
\end{equation}
defines the overall SIM training matrix that combines the two central layers ($M$ and $M+1$), and
$\boldsymbol{\Phi}_{M}=[\bm{\phi}_{M,1},\ldots,  \bm{\phi}_{M,K}] \in \mathbb{C}^{N \times K}$ and $\boldsymbol{\Phi}_{M+1}=[\bm{\phi}_{M+1,1},\ldots,  \bm{\phi}_{M+1,K}] \in \mathbb{C}^{N \times K}$ are the training matrices of these layers. In a procedure similar to the odd-SIM case, applying property (\ref{eq:propertyprodkron}) to (\ref{eq:Teven}), we group left and right SIM factors and write $\mathbf T=\mathbf T_{\textrm{L}}\otimes \mathbf T_{\textrm{R}}$, where
\begin{align}
\mathbf T_{\textrm{L}} &=  \prod_{m=1}^{M-1}\mathbf \Phi_m\mathbf W_{m+1}^T\\
&=
\mathbf \Phi_1\mathbf W_2^T\,\mathbf \Phi_2\mathbf W_3^T\cdots\mathbf \Phi_{M-1}\mathbf W_M^T,\nonumber\\
\mathbf T_{\textrm{R}} &=  \prod_{m=1}^{M-1} \mathbf \Phi_{2M-m+1}\mathbf W_{2M-m+1}\\
&= \mathbf \Phi_{2M}\mathbf W_{2M}\,\cdots\mathbf \Phi_{M+2}\mathbf W_{M+2}.\nonumber
\end{align}
This allows us to rewrite (\ref{eq:STeven}) as follows:
\begin{equation}\label{eq:STeven2}
\mathbf S^T
=
(\mathbf T_{\textrm{L}}\otimes \mathbf T_{\textrm{R}})
\textrm{Diag}(\mathrm{vec}(\mathbf W_{M+1}))
(\boldsymbol{\Phi}_{M} \diamond \boldsymbol{\Phi}_{M+1}).
\end{equation}
It can be concluded that $\mathbf S^T$ is the transposed mode-3 unfolding of a tensor $\ten S \in\mathbb C^{N\times N\times K}$ that follows a  Tucker decomposition given by:
\begin{equation}\label{eq:StensorTucker}
\ten S
=
\ten W_{M+1}
\times_1 \mathbf T_{\textrm{R}}
\times_2
\mathbf T_{\textrm{L}}
\times_3
(\boldsymbol{\Phi}_{M} \diamond \boldsymbol{\Phi}_{M+1})^T,
\end{equation}
where $\ten W_{M+1} \in\mathbb C^{N\times N\times N}$ is the core tensor such that $[\ten W_{M+1}]_{(3)}= \textrm{Diag}(\mathrm{vec}(\mathbf W_{M+1}))$. 
By substituting the SIM cascade (\ref{eq:StensorTucker}) into the received signal model (\ref{Ztensor}), we get
\begin{equation}
\ten Z
=
\ten W_{M+1} 
\times_1 \mathbf Z_G
\times_2 \mathbf Z_H
\times_3 \bar{\mathbf \Phi}_{} \\
\in\mathbb C^{N\times N \times K},\label{eq:ZtensorCP_even}
\end{equation}
where
\begin{equation}\label{eq:ZgZh_even}
\mathbf Z_G = \mathbf G\mathbf T_{\textrm{R}},
\qquad
\mathbf Z_H = \mathbf H^T\mathbf T_{\textrm{L}}.
\end{equation}
\revblue{In the even-layer Tucker model, the known core $\ten W_{M+1}$ preserves the coupling induced by the central inter-layer propagation matrix. Under the rank conditions discussed later and when the fixed sub-cascades are known or calibrated, the ALS updates can be interpreted as estimating the physical channel factors $\mathbf G$ and $\mathbf H$ through their products with $\mathbf T_{\textrm{R}}$ and $\mathbf T_{\textrm{L}}$, up to the intrinsic scalar ambiguity. Similar to the odd-layer case, the ambiguity cancels when reconstructing the end-to-end cascaded channel for a given SIM configuration.}

\subsection{Odd vs. Even Tensor Constructions}
We will now compare the two proposed TenSIM tensor constructions. On the SIM-cascade side, the odd-layer model in (\ref{eq:STcompact}) yields a third-order tensor with PARAFAC structure, whereas the even-layer model in (\ref{eq:StensorTucker}) yields a third-order tensor with Tucker structure. In both cases, the SIM training process is represented by a third-order tensor whose first two modes correspond to the two directional SIM sub-cascades, and whose third mode captures the block-varying training factors. The key difference is that, for odd $L$, the presence of a single varying middle layer leads to a superdiagonal core and, therefore, to a PARAFAC model, while, for even $L$, the two varying middle layers and the intermediate propagation matrix $\mathbf W_{M+1}$ generate a nontrivial core tensor $\mathcal W_{M+1}$, which naturally leads to a Tucker model.

The same correspondence appears in the received pilot tensors. The odd-layer received model in (\ref{eq:ZtensorCP}) preserves the PARAFAC form with factors $\mathbf Z_G$, $\mathbf Z_H$, and $\mathbf \Phi$, whereas the even-layer received model in (\ref{eq:ZtensorCP_even}) preserves the Tucker form with factor matrices $\mathbf Z_G$, $\mathbf Z_H$, and $\bar{\mathbf \Phi}$ coupled through the core $\mathcal W_{M+1}$. The TenSIM distinction between PARAFAC for odd-layer SIMs and Tucker for even-layer SIMs is a direct consequence of the different physical roles of the central part of the SIM cascade under the proposed training strategy.
\revblue{This distinction is therefore broader than a purely algebraic rearrangement, but it is also tied to the chosen low-reconfiguration protocol. To this end, with a single trainable central layer, the middle of the cascade acts as a diagonal modulation, separating two effective directional sub-cascades. With two trainable central layers, the field must propagate through the central inter-layer matrix, so the coupling cannot be diagonalized away and appears as a Tucker core. Training protocols that excite additional layers could yield richer coupled tensor models, whereas the proposed protocol selects the central layer(s) to obtain the simplest identifiable structure compatible with the parity of the stack.}

\subsection{Design of SIM Training Configuration}
\label{sec:training_design}
Now that the odd- and even-layer training factors $\mathbf \Phi$ and $\bar{\mathbf \Phi}$ have been explicitly defined, we discuss how to design these SIM phase profiles in practice. Beyond identifiability, the training matrices affect the numerical conditioning of the LS updates and must remain compatible with practical hardware constraints. In particular, useful training designs should provide sufficient diversity across blocks, low mutual correlation among phase patterns, and unit-modulus entries implementable by the metasurface~\cite{RIS_survey}.

Accordingly, let $\mathbf \Phi_{\ell,k}=\mathrm{Diag}(\boldsymbol{\phi}_{\ell,k})$ denote the diagonal phase-shift matrix applied at layer $\ell$ during block $k$, where $\boldsymbol{\phi}_{\ell,k}\in\mathbb C^N$ collects the corresponding reflection coefficients. For practical SIM hardware, these coefficients satisfy $|[\boldsymbol{\phi}_{\ell,k}]_n|=1$, $\forall n$. For odd-layer SIMs, only the central layer is varied during training, so the resulting observation tensor is driven by a single training matrix
\begin{equation}
\mathbf \Phi
=
\begin{bmatrix}
\boldsymbol{\phi}(1) & \boldsymbol{\phi}(2) & \cdots & \boldsymbol{\phi}(K)
\end{bmatrix}
\in\mathbb C^{N\times K}.
\end{equation}
A natural choice is a structured unit-modulus design, such as truncated discrete Fourier transform (DFT) columns~\cite{Tellambura1999DFT} or CAZAC-type sequences~\cite{Chu1972}, which promote low correlation between training blocks and improve the conditioning of the associated estimation problems. In particular, when $K\leq N$, choosing distinct columns of a normalized DFT matrix provides orthogonal training patterns while preserving constant modulus. For even-layer SIMs, the two central layers are jointly reconfigured, and the effective training factor is the following:
\begin{equation}
\bar{\mathbf \Phi}
=
(\boldsymbol{\Phi}_{M} \diamond \boldsymbol{\Phi}_{M+1})^T
\in \mathbb{C}^{K \times N^2},
\end{equation}
where $\boldsymbol{\Phi}_{M},\boldsymbol{\Phi}_{M+1} \in \mathbb{C}^{N \times K}$.
Thus, the quality of $\bar{\mathbf \Phi}$ is determined by the pair $(\boldsymbol{\Phi}_{M},\boldsymbol{\Phi}_{M+1})$. Note that $\bar{\mathbf \Phi}\,\bar{\mathbf \Phi}^{H}
=
(\boldsymbol{\Phi}_{M}^{H}\boldsymbol{\Phi}_{M})
\odot
(\boldsymbol{\Phi}_{M+1}^{H}\boldsymbol{\Phi}_{M+1})$,
so $\bar{\mathbf \Phi}$ has orthogonal rows if both factors have orthogonal columns, and full row rank if both factors have full column rank; a convenient construction is to choose $\boldsymbol{\Phi}_{M}$ and $\boldsymbol{\Phi}_{M+1}$ from normalized DFT codebooks. Let $[\boldsymbol{\phi}_{M,k}]_n = e^{-j\frac{2\pi}{N}(n-1)p_k}$ and $[\boldsymbol{\phi}_{M+1,k}]_n = e^{-j\frac{2\pi}{N}(n-1)q_k}$, $\forall n=1,\ldots,N$, and enumerate the index pairs as follows:
\begin{equation}
p_k = \left\lfloor \frac{k-1}{N} \right\rfloor,
\qquad
q_k = (k-1) \bmod N,
\qquad k=1,\ldots,K,\nonumber
\end{equation}
with $K\leq N^2$. Then $\bar{\boldsymbol{\phi}}_k
=
\boldsymbol{\phi}_{M,k}\otimes \boldsymbol{\phi}_{M+1,k}$
Equivalently, $\bar{\mathbf \Phi}^T$ consists of distinct columns of the separable two-dimensional DFT codebook $(\mathbf F_N\otimes \mathbf F_N)$. In particular, when $K=N^2$, $\bar{\mathbf \Phi}^T$ coincides with $(\mathbf F_N\otimes \mathbf F_N)$ up to column permutation. 

Overall, the odd-layer case requires one well-conditioned SIM training matrix, whereas the even-layer case benefits from a coordinated design of two central-layer training matrices. These considerations will be revisited later on in the identifiability and numerical evaluation sections. 

\noindent\textit{Remark 2}: For the odd-layer model, choosing $\mathbf \Phi$ from a normalized DFT codebook is also beneficial for the numerical behavior of the PARAFAC-ALS updates developed later. Indeed, using the Gram identity of the Khatri--Rao product $(\mathbf A\diamond \mathbf B)^H(\mathbf A\diamond \mathbf B)=(\mathbf A^H\mathbf A)\odot(\mathbf B^H\mathbf B)$,
if the training matrix satisfies $\mathbf \Phi^H\mathbf \Phi=\mathbf I$ (or, more generally, is column-orthogonal up to a scale factor), then, $(\mathbf \Phi\diamond \mathbf Z_H)^H(\mathbf \Phi\diamond \mathbf Z_H)=\mathbf I\odot(\mathbf Z_H^H\mathbf Z_H)$
and similarly for $\mathbf \Phi\diamond \mathbf Z_G$. 
An analogous expression holds for $\mathbf \Phi\diamond \mathbf Z_G$. Therefore, the Khatri--Rao design matrices appearing in the odd-layer LS subproblems in (\ref{eq:JG}) and (\ref{eq:JH}) and in the corresponding updates (\ref{eq:LS_G_odd}) and (\ref{eq:LS_H_odd}) inherit zero inter-column correlation from the training factor. This fact improves conditioning and simplifies the associated pseudo-inverse computations. In this sense, DFT-based choices of $\mathbf \Phi$ are attractive from a numerical-stability viewpoint.

\section{The Proposed TenSIM Framework}
Having obtained a PARAFAC model for odd-layer SIMs and a Tucker model for even-layer SIMs, we propose estimating the unknown factors using the classical ALS procedure \cite{Comon2009ALS}. In both cases, ALS follows the same principle: at each step, we keep all factors but one fixed, so the tensor model becomes linear in the selected factor, and its update is obtained via an LS solution. This process is repeated factor-by-factor until convergence (or until a maximum number of iterations is reached). The ALS routine shared by both models consists of the steps: \textit{i}) initialize the unknown factors; \textit{ii}) update one factor from the current residual; \textit{iii}) cycle through all remaining factors; and \textit{iv}) evaluate a stopping criterion, e.g., a decrease in relative reconstruction error below a threshold. 

It is noted that the odd and even cases differ in the algebraic form of each LS subproblem (PARAFAC factors in the odd case \textit{versus} Tucker factors in the even case). Accordingly, we adopt two tensor-based algorithmic routes: one tailored to the odd-layer case and the other to the even-layer case, since the SIM training structure differs between them. In what follows, we provide the corresponding optimization problems and LS updates for each case.

\subsection{Odd-Layer Case ($L=2M+1$): PARAFAC-ALS}\label{eq:oddparafac}
From the received signal tensor in (\ref{eq:ZtensorCP}), its $1$- and $2$-mode matrix unfoldings can be written as follows:
\begin{align}
\mathbf Z_{(1)} &= \mathbf G\mathbf S_{\textrm{R}}\,(\mathbf \Phi \diamond \mathbf Z_H)^T + \mathbf B_{(1)},\\
\mathbf Z_{(2)} &= \mathbf H^T\mathbf S_{\textrm{L}}\,(\mathbf \Phi \diamond \mathbf Z_G)^T + \mathbf B_{(2)},
\end{align}
where $\mathbf B_{(1)}$ and $\mathbf B_{(2)}$ are the mode-1 and mode-2 unfoldings of the noise tensor $\ten B$.
From these unfoldings, the following problems are solved:
\begin{align}
J_G(\mathbf G) &\triangleq \left\|\mathbf Z_{(1)}-\mathbf G\mathbf S_{\textrm{R}}(\mathbf \Phi \diamond \mathbf Z_H)^T\right\|_F^2,\label{eq:JG}\\
J_H(\mathbf H) &\triangleq \left\|\mathbf Z_{(2)}-\mathbf H^T\mathbf S_{\textrm{L}}(\mathbf \Phi \diamond \mathbf Z_G)^T\right\|_F^2,\label{eq:JH}
\end{align}
with the corresponding LS minimizers constructed as:
\begin{align}
\widehat{\mathbf G} &= \mathbf Z_{(1)}\Big(\mathbf S_{\textrm{R}}(\widehat{\mathbf \Phi}\diamond (\widehat{\mathbf H}^T\mathbf S_{\textrm{L}}))^T\Big)^+,\label{eq:LS_G_odd}\\
\widehat{\mathbf H}^T &= \mathbf Z_{(2)}\Big(\mathbf S_{\textrm{R}}(\widehat{\mathbf \Phi}\diamond(\widehat{\mathbf G}\mathbf S_{\textrm{L}}))^T\Big)^+,\label{eq:LS_H_odd}
\end{align}
Note that, when $\mathbf \Phi$ is fixed and known to the Rx (usual assumption), only the first two updates are required.

\subsection{Even-Layer Case ($L=2M$): Tucker-ALS}

From (\ref{eq:ZtensorCP_even}), define $\bar{\mathbf \Phi}\triangleq(\boldsymbol{\Phi}_{M} \diamond \boldsymbol{\Phi}_{M+1})^T$ and $\ten W\triangleq\ten W_{M+1}$. The 1-mode and 2-mode unfoldings of the received pilot tensor $\ten{Z}$ can be written as follows:
\begin{align}
\mathbf{Z}_{(1)}
&=
\mathbf{G}\mathbf{T}_{\textrm{R}}
\mathbf{W}_{(1)}
(\bar{\boldsymbol{\Phi}} \otimes \mathbf{Z}_H)^T
+
\mathbf B_{(1)},
\\
\mathbf{Z}_{(2)}
&=
\mathbf{H}^T\mathbf{T}_{\textrm{L}}
\mathbf{W}_{(2)}
(\bar{\boldsymbol{\Phi}} \otimes \mathbf{Z}_G)^T
+
\mathbf B_{(2)} ,
\end{align}
where $\mathbf{W}_{(1)}$ and $\mathbf{W}_{(2)}$ denote the mode-1 and mode-2 unfoldings of the known core tensor determined by the central inter-layer propagation matrix $\mathbf{W}_{M+1}$. These unfoldings lead to the following LS subproblems:
\begin{align}
J_G(\mathbf{G})
&=
\left\|
\mathbf{Z}_{(1)}
-
\mathbf{G}
\mathbf{T}_R
\mathbf{W}_{(1)}
(\bar{\boldsymbol{\Phi}} \otimes \mathbf{Z}_H)^T
\right\|_F^2 ,
\\
J_H(\mathbf{H})
&=
\left\|
\mathbf{Z}_{(2)}
-
\mathbf{H}^T
\mathbf{T}_L
\mathbf{W}_{(2)}
(\bar{\boldsymbol{\Phi}} \otimes \mathbf{Z}_G)^T
\right\|_F^2 .
\end{align}
leading to the following LS updates:
\begin{align}
\widehat{\mathbf{G}}
&=
\mathbf{Z}_{(1)}
\left(
\mathbf{T}_R
\mathbf{W}_{(1)}
(\bar{\boldsymbol{\Phi}} \otimes \widehat{\mathbf{H}}^T\mathbf{T}_L)^T
\right)^+,\label{eq:updateGtucker}
\\
\widehat{\mathbf{H}}^T
&=
\mathbf{Z}_{(2)}
\left(
\mathbf{T}_L
\mathbf{W}_{(2)}
(\bar{\boldsymbol{\Phi}} \otimes \widehat{\mathbf{G}}\mathbf{T}_R)^T
\right)^+. \label{eq:updateHtucker}
\end{align}
It is noted that, when $\bar{\mathbf \Phi}$ is fixed by training and known to the Rx, ALS therefore alternates only between $\mathbf G$ and $\mathbf H$.

\noindent\textit{Remark 3}: The proposed TenSIM-Tucker estimator can recover the physical channel matrices $\mathbf G$ and $\mathbf H$ accurately when the SIM has a relatively small number of meta-atoms and the inter-layer distance is small. In this regime, the LS updates remain numerically well-behaved, so the estimator can recover $\mathbf G$ and $\mathbf H$ up to a single scalar factor. However, as $N$ or $d_{\mathrm{layer}}$ increases (see the Rayleigh--Sommerfeld model in (\ref{eq:rayleigh_sommerfeld})), the relevant inter-layer propagation matrices become increasingly ill conditioned, especially the central matrix $\mathbf W_{M+1}$ that enters the Tucker LS updates through its mode-1, $\mathbf W_{(1)}$, and mode-2, $\mathbf W_{(2)}$, unfoldings. Consequently, when $\mathbf W_{M+1}$ is ill-conditioned, the associated unfoldings $\mathbf W_{(1)}$ and $\mathbf W_{(2)}$ also become ill-conditioned, which in turn deteriorates the conditioning of the LS subproblems in (\ref{eq:updateGtucker}) and (\ref{eq:updateHtucker}). This loss of conditioning amplifies numerical errors in pseudoinverse-based updates, degrading the estimator's stability. Therefore, although the Tucker estimator remains valid and effective for moderately sized SIM cascades with small inter-layer spacing, its sensitivity to the conditioning of the propagation matrices becomes a limiting factor in larger or widely spaced even-layer SIM architectures.

\subsection{The Case of Unknown/Imperfect SIM Configuration}\label{sec:blind-SIM-variant}
In practical implementations, the SIM training matrix may be unavailable or not perfectly known at the Rx due to phase quantization errors, controller latency, RF chain mismatch, calibration drift, mutual coupling, and other hardware impairments. In this case, a robust alternative is to jointly estimate the training-related factor with the channel-side factors, as follows.

For the odd-layer PARAFAC model, re-activate the mode-3 LS subproblem as:
\begin{equation}
J_{\Phi}(\mathbf \Phi)=\left\|\mathbf Z_{(3)}-\mathbf \Phi(\mathbf Z_H \diamond \mathbf Z_G)^T\right\|_F^2,
\label{eq:JPhi_odd}
\end{equation}
whose LS minimizer is given by:
\begin{equation}
\widehat{\mathbf \Phi}=\mathbf Z_{(3)}\Big(((\widehat{\mathbf H}^T\mathbf S_{\textrm{L}})\diamond(\widehat{\mathbf G}\mathbf S_{\textrm{R}}))^T\Big)^+.
\label{eq:LS_Phi_odd}
\end{equation}
Thus, ALS cycles over $\{\mathbf G,\mathbf H,\mathbf \Phi\}$ instead of only $\{\mathbf G,\mathbf H\}$.

For the even-layer Tucker model, the analogous mode-3 LS cost function is defined as follows:
\begin{equation}
J_{\bar{\Phi}}(\bar{\mathbf \Phi})=\left\|\mathbf Z_{(3)}-\bar{\mathbf \Phi}\mathbf W_{(3)}(\mathbf Z_H\otimes \mathbf Z_G)^T\right\|_F^2,
\label{eq:JPhi_even}
\end{equation}
with LS update given by:
\begin{equation}
\widehat{\bar{\mathbf \Phi}}=\mathbf Z_{(3)}\Big({\mathbf W}_{(3)}(\widehat{\mathbf H}^T\mathbf T_{\textrm{L}}\otimes\widehat{\mathbf G}\mathbf T_{\textrm{R}})^T\Big)^+.
\label{eq:LS_Phi_even}
\end{equation}
Therefore, the Tucker-ALS routine alternates over $\{\mathbf G,\mathbf H,\bar{\mathbf \Phi}\}$ when the training matrix is uncertain.

\revblue{In the simulations, the stopping tolerance is applied to the relative reconstruction error $e_i=\|\ten Z-\widehat{\ten Z}^{(i)}\|_F^2/\|\ten Z\|_F^2$, and the iterations stop when the relative decrease of $e_i$ falls below $\epsilon$ or when $I_{\max}$ is reached. The nominal implementation uses algebraic/SVD-based starts when available. Otherwise, random full-column-rank initial factors are used, followed by column normalization to control the scaling ambiguity. For odd-layer SIMs, the ALS routine outputs the effective factors $\widehat{\mathbf Z}_G$ and $\widehat{\mathbf Z}_H$ because the fixed sub-cascades are absorbed into them. For even-layer SIMs, the known Tucker core and calibrated fixed sub-cascades allow the estimates to be interpreted as $\widehat{\mathbf G}$ and $\widehat{\mathbf H}$, equivalently through $\widehat{\mathbf Z}_G=\widehat{\mathbf G}\mathbf T_{\textrm{R}}$ and $\widehat{\mathbf Z}_H=\widehat{\mathbf H}^T\mathbf T_{\textrm{L}}$. In both cases, the final performance metric is computed from the reconstructed cascaded channel, for which the intrinsic scaling ambiguity cancels.}

\begin{algorithm}[t]
\caption{TenSIM Channel Estimation Framework}
\label{alg:unified_sim_als}
\begin{algorithmic}[1]
\Require \algwrap{Received signal $\ten Z$, number of layers $L$,\\ stopping threshold $\epsilon$, maximum iterations $I_{\max}$;\\ SIM training matrix ($\mathbf \Phi$ or $\bar{\mathbf \Phi}$) when it is known,\\ or an initialization of this factor when unknown.}
\Ensure Estimated factors $\widehat{\mathbf G}$ and $\widehat{\mathbf H}$. 
\If{$L=2M+1$ (odd-layer case)}
    \State \algwrap{Build the PARAFAC model in (\ref{eq:ZtensorCP}) with factors $\mathbf Z_G=\mathbf G\mathbf S_{\textrm{R}}$ and $\mathbf Z_H=\mathbf H^T\mathbf S_{\textrm{L}}$.}
    \State \algwrap{Initialize $\widehat{\mathbf G}$ and $\widehat{\mathbf H}$; if the SIM training matrix is unknown, also initialize $\widehat{\mathbf \Phi}$.}
    \For{$i=1,\dots,I_{\max}$}
        \State Update $\widehat{\mathbf G}$ using (\ref{eq:LS_G_odd}).
        \State Update $\widehat{\mathbf H}$ using (\ref{eq:LS_H_odd}).
        \State \algwrap{{\it Optional:} Update $\widehat{\mathbf \Phi}$ using (\ref{eq:LS_Phi_odd}) only if\\the SIM training matrix is unknown.}
        \If{relative reconstruction-error decrease $<\epsilon$}
            \State \textbf{break}
        \EndIf
    \EndFor
\Else \Comment{$L=2M$, even-layer case}
    \State \algwrap{Build the Tucker model in (\ref{eq:ZtensorCP_even}) with factors \\ $\mathbf Z_G=\mathbf G\mathbf T_{\textrm{R}}$ and $\mathbf Z_H=\mathbf H^T\mathbf T_{\textrm{L}}$.}
    \State \algwrap{Initialize $\widehat{\mathbf G}$ and $\widehat{\mathbf H}$; if the SIM training matrix is unknown, also initialize $\widehat{\bar{\mathbf \Phi}}$.}
    \For{$i=1,\dots,I_{\max}$}
        \State Update $\widehat{\mathbf G}$ using (\ref{eq:updateGtucker}).
        \State Update $\widehat{\mathbf H}$ using (\ref{eq:updateHtucker}).
        \State \algwrap{{\it Optional:} Update $\widehat{\bar{\mathbf \Phi}}$ using (\ref{eq:LS_Phi_even}) only if \\ the SIM training matrix is unknown.} 
        \If{relative reconstruction-error decrease $<\epsilon$}
            \State \textbf{break}
        \EndIf
    \EndFor
\EndIf
\end{algorithmic}
\end{algorithm}

\subsubsection*{Convergence Considerations} The TenSIM ALS estimators used in both the PARAFAC (odd-layer) and Tucker (even-layer) cases, summarized in Algorithm~\ref{alg:unified_sim_als}, inherit the standard monotonic non-increase of the LS objective under exact block updates. In the common operational regime where the SIM training factor is known (or treated as fixed), convergence is typically reached in a small number of iterations (often a few to a few tens), with the exact count depending mainly on the signal-to-noise ratio (SNR) and the conditioning of the design matrices. In the blind-SIM variant, where $\mathbf \Phi$ (or $\bar{\mathbf \Phi}$) is jointly estimated with the channel matrices, convergence becomes more sensitive to initialization because of the additional multilinear coupling and scale/permutation ambiguities. Still, practical robustness can be significantly improved through informed initializations (e.g., algebraic/singular-value-decomposition-based starts), safeguarded update rules (e.g., line-search/damping), and mature tensor-optimization toolboxes, such as Tensorlab \cite{Tensorlab2016}. In particular, a warm start for the SIM training factor using the nominal phase codebook is often effective in practice, accelerating convergence and reducing poor local-stationary outcomes.

\subsection{Computational Complexity}
Let $I_{\max}$ denote the number of ALS iterations until convergence. The dominant cost in each iteration comes from forming LS design matrices (Khatri-Rao/Kronecker products), multiplying unfoldings by these matrices, and computing right inverses of the resulting matrix systems with $N$ rows.

For the following odd-layer PARAFAC model, the updates of $\mathbf G$ and $\mathbf H$ use the design matrices:
\begin{equation}\label{eq:A_G_and_A_H}
\begin{aligned}
\mathbf A_G &\triangleq\mathbf S_{\textrm{R}}(\mathbf \Phi\diamond \mathbf Z_H)^T\in\mathbb C^{N\times KM_T},\\
\mathbf A_H &\triangleq\mathbf S_{\textrm{L}}(\mathbf \Phi\diamond \mathbf Z_G)^T\in\mathbb C^{N\times KM_R}.
\end{aligned}
\end{equation}
Hence, the per-iteration complexity is dominated by
\begin{equation}
O\!\left(N^2K(M_T+M_R) + N^3\right),
\end{equation}
where the $N^2K(\cdot)$ terms account for LS matrix products and the $N^3$ term comes from solving/pseudo-inverting the normal equations. Therefore, over $I_{\max}$ iterations, the overall complexity is:
\begin{equation}
C_{\text{odd}}=O\!\left(I_{\max}\left(N^2K(M_T+M_R)+N^3\right)\right).
\end{equation}
If $\mathbf \Phi$ is fixed and known, the third update is skipped, further reducing runtime and memory traffic.

For the even-layer Tucker model, the updates involve
\begin{equation}
\begin{aligned}
\mathbf B_G &\triangleq\mathbf T_{\textrm{R}}\mathbf W_{(1)}(\bar{\mathbf \Phi}\otimes \mathbf Z_H)^T,\\
\mathbf B_H &\triangleq\mathbf T_{\textrm{L}}\mathbf W_{(2)}(\bar{\mathbf \Phi}\otimes \mathbf Z_G)^T,
\end{aligned}
\end{equation}
where the Kronecker factors and the core unfoldings $\mathbf W_{(1)}$ and $\mathbf W_{(2)}$ increase arithmetic complexity and memory costs. A representative per-iteration order is:
\begin{equation}
O\!\left(N^2K(M_T+M_R) + N^3 + N^2\,\mathrm{nnz}(\mathbf W_{(1:2)})\right),
\end{equation}
with $\mathrm{nnz}(\mathbf {W}_{(1:2)})$ denoting the effective number of nonzero core coefficients used in the two channel-side updates. By design, $\mathrm{nnz}(\mathbf W_{(1:2)})=N^2$, so the per-iteration order simplifies to $ O\!\left(N^2K(M_T+M_R) + N^3 + N^4\right)$.
Thus, the overall complexity is:
\begin{equation}
 C_{\text{even}}= O\!\left(I_{\max}\left(N^2K(M_T+M_R)+N^3+N^4\right)\right).
\end{equation}

The latter complexity expressions highlight an important trade-off. The TenSIM-PARAFAC estimator is computationally lighter and scales more favorably with system size, whereas the TenSIM-Tucker estimator incurs higher complexity because of the coupled core tensor updates. In practice, the preferred operating regime depends on the application requirements. For fast tracking and large-scale arrays, using an odd number of layers together with the PARAFAC-based estimator is attractive. When the even-layer structure is required, the Tucker-based estimator justifies its additional cost by providing a more faithful model of the SIM channel. 

\subsection{Impact of the SIM Configuration on Numerical Stability}
While both the odd- and even-layer models originate from the same SIM cascade physics, their numerical behavior differs significantly due to the distinct algebraic structures induced by the training protocol. 
For the odd-layer architecture, the resulting received signal model admits a PARAFAC model in which the effective channel factors $\mathbf{Z}_G$ and $\mathbf{Z}_H$ appear linearly in the tensor decomposition, as shown in expression~(\ref{eq:oddlayer_parafac_model}). In this case, the estimation updates involve Khatri–Rao products between the central-layer SIM training matrix $\mathbf{\Phi}$ and the channel factors. Since this training matrix can be designed to have full column rank and favorable conditioning, the resulting LS problems remain well-behaved even for relatively large SIM sizes. In contrast, the even-layer architecture yields a Tucker model given by expression (\ref{eq:ZtensorCP_even}), whose core tensor is determined by the inter-layer propagation matrix $\mathbf{W}_{M+1}$. 
As a result, the numerical conditioning of the LS problems in the ALS updates becomes directly influenced by the singular value spectrum of $\mathbf{W}_{M+1}$. This matrix tends to become increasingly ill-conditioned as the number of SIM elements or the inter-layer distance increases. 
In this case, the electromagnetic coupling between layers concentrates energy into a smaller set of spatial modes, effectively reducing the numerical rank of the interlayer propagation matrices. Consequently, ALS updates for TenSIM-Tucker become increasingly sensitive to numerical errors. Hence, this estimator performs more reliably when the inter-layer spacing remains small, and the SIM aperture is moderate. 

\revblue{It is important to distinguish algebraic identifiability from stable finite-SNR estimation. The rank conditions derived in Section~\ref{sec:ident} guarantee uniqueness in an ideal, noiseless sense, but they do not prevent performance degradation when the Khatri--Rao/Kronecker design matrices are nearly rank-deficient or poorly conditioned. In such cases, the ALS algorithm may converge slowly. This effect is more pronounced for the Tucker branch because the core unfoldings inherit the conditioning of $\mathbf W_{M+1}$, whereas the PARAFAC branch places the fixed propagation operators inside the effective factors and avoids an explicit nonidentity core in the LS updates.}

\section{Identifiability Study}\label{sec:ident}
The ALS updates in both odd- and even-layer cases are computed through right pseudo-inverses. Therefore, existence and uniqueness (in the LS sense) are governed by the rank of the corresponding design matrices. 


\begin{proposition}[Odd-layer LS identifiability conditions]
If $\operatorname{rank}(\mathbf A_G)=N$ and $\operatorname{rank}(\mathbf A_H)=N$ in~\eqref{eq:A_G_and_A_H}, then the LS updates in (\ref{eq:LS_G_odd}) and (\ref{eq:LS_H_odd}) are uniquely defined (for fixed companion factors). A necessary dimension condition is $KM_T\geq N$ and $KM_R\geq N$.
\end{proposition}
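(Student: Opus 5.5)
The plan is to reduce each LS subproblem to a standard linear least-squares problem in the unknown factor and then read uniqueness off the rank of the design matrix. For the $\mathbf G$-update, with $\mathbf Z_H$ and $\mathbf \Phi$ fixed, the cost in \eqref{eq:JG} has the form $\|\mathbf Z_{(1)}-\mathbf X\mathbf A_G\|_F^2$ with $\mathbf X\triangleq\mathbf G\mathbf S_{\textrm{R}}$, or equivalently, treating $\mathbf G$ directly, with design matrix $\mathbf A_G$ as in \eqref{eq:A_G_and_A_H}. This cost decouples row by row: each row $\mathbf x_r^T$ ($r=1,\ldots,M_R$) minimizes $\|\mathbf z_r^T-\mathbf x_r^T\mathbf A_G\|_2^2$. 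Transposing, this is an ordinary LS problem $\min_{\mathbf x}\|\mathbf z_r-\mathbf A_G^T\mathbf x\|_2^2$ with an $KM_T\times N$ system matrix. Using \eqref{eq:vecABC}, the same conclusion follows from the vectorized form $(\mathbf A_G^T\otimes\mathbf I_{M_R})\mathrm{vec}(\mathbf G)$, whose Gram matrix is $(\mathbf A_G^{*}\mathbf A_G^T)\otimes\mathbf I_{M_R}$.

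The key step is the normal equations $\mathbf X\mathbf A_G\mathbf A_G^H=\mathbf Z_{(1)}\mathbf A_G^H$. The minimizer is unique if and only if the $N\times N$ Gram matrix $\mathbf A_G\mathbf A_G^H$ is nonsingular, which holds exactly when $\operatorname{rank}(\mathbf A_G)=N$, i.e., when $\mathbf A_G$ has full row rank. In that case the right pseudo-inverse is $\mathbf A_G^+=\mathbf A_G^H(\mathbf A_G\mathbf A_G^H)^{-1}$, and the update $\mathbf Z_{(1)}\mathbf A_G^+$ in \eqref{eq:LS_G_odd} is the unique minimizer. Conversely, if the rank is deficient, any $\mathbf X+\mathbf N$ with $\mathbf N\mathbf A_G=\mathbf 0$, $\mathbf N\neq\mathbf 0$, attains the same cost, so uniqueness fails. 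The $\mathbf H$-update in \eqref{eq:LS_H_odd} is handled identically, using $\mathbf Z_{(2)}$ and $\mathbf A_H$ with $M_T$ rows of $\mathbf H^T$.

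The main subtlety is the step from the effective factor to $\mathbf G$ itself, since \eqref{eq:LS_G_odd} writes the update in terms of $\mathbf G$ multiplied by $\mathbf S_{\textrm{R}}$. I would resolve it by stating that uniqueness refers to the product $\mathbf G\mathbf S_{\textrm{R}}$ (equivalently $\mathbf Z_G$), which is what the update returns, in line with the earlier remark that only the effective factors are identifiable in the odd case. Alternatively, the design matrix can be taken as $\mathbf A_G$, which already contains $\mathbf S_{\textrm{R}}$, so that the rank hypothesis directly governs $\mathbf G$.

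The necessary dimension condition then follows immediately. Since $\mathbf A_G\in\mathbb C^{N\times KM_T}$, we have $\operatorname{rank}(\mathbf A_G)\le\min(N,KM_T)$, so $\operatorname{rank}(\mathbf A_G)=N$ forces $KM_T\ge N$. Likewise, $\mathbf A_H\in\mathbb C^{N\times KM_R}$ gives $KM_R\ge N$.
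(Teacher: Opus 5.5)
Your proof is correct and matches the paper's reasoning. The paper gives no formal proof; it only remarks that the updates are right pseudo-inverses of the $N$-row design matrices $\mathbf A_G$ and $\mathbf A_H$, so uniqueness is equivalent to full row rank, and the dimension condition follows from $\operatorname{rank}(\mathbf A_G)\le\min(N,KM_T)$. Your normal-equations and null-space argument spells this out.

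One small slip: $\mathbf A_G$ already contains $\mathbf S_{\textrm{R}}$, so the cost is $\|\mathbf Z_{(1)}-\mathbf G\mathbf A_G\|_F^2$, not $\|\mathbf Z_{(1)}-\mathbf X\mathbf A_G\|_F^2$ with $\mathbf X=\mathbf G\mathbf S_{\textrm{R}}$. Accordingly, \eqref{eq:LS_G_odd} returns $\mathbf G$ itself, so your ``alternative'' reading is the right one. The subtlety is moot in any case: $\operatorname{rank}(\mathbf A_G)=N$ with $\mathbf S_{\textrm{R}}\in\mathbb C^{N\times N}$ forces $\mathbf S_{\textrm{R}}$ to be nonsingular, so uniqueness of $\mathbf G$ and uniqueness of $\mathbf G\mathbf S_{\textrm{R}}$ are equivalent.
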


For the even-layer (Tucker) case, recall that the effective factors satisfy $\mathbf Z_G=\mathbf G\mathbf T_{\textrm{R}}$ and $\mathbf Z_H=\mathbf H^T\mathbf T_{\textrm{L}}$. Equivalently, the even-layer regressors may be expressed in terms of $\mathbf Z_G$ and $\mathbf Z_H$, but we write them below in terms of $\mathbf G$ and $\mathbf H$ to match the direct LS updates in (\ref{eq:updateGtucker}) and (\ref{eq:updateHtucker}), as follows:
\begin{eqnarray}
\mathbf B_G &\triangleq& \mathbf T_{\textrm R}\mathbf W_{(1)}(\bar{\mathbf \Phi}\otimes \mathbf H^T\mathbf T_{\textrm L})^T \in \mathbb{C}^{N \times KM_T},\label{eq:80}\\
\mathbf B_H &\triangleq& \mathbf T_{\textrm L}\mathbf W_{(2)}(\bar{\mathbf \Phi}\otimes \mathbf G\mathbf T_{\textrm R})^T \in \mathbb{C}^{N \times KM_R}.\label{eq:81}
\end{eqnarray}

\begin{proposition}[Even-layer LS identifiability conditions]
If $\operatorname{rank}(\mathbf B_G)=N$ and $\operatorname{rank}(\mathbf B_H)=N$, then the LS updates in (\ref{eq:updateGtucker}) and (\ref{eq:updateHtucker}) are uniquely defined (for fixed companion factors). A necessary dimension condition is $KM_T\geq N$ and $KM_R\geq N$.
\end{proposition}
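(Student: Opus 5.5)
The argument is the same as for the odd-layer proposition: the full-row-rank property of the right-hand regressor is exactly what makes the LS problem strictly convex. Fix $\widehat{\mathbf H}$, and hence $\mathbf H$ in \eqref{eq:80}. The cost $J_G(\mathbf G)=\|\mathbf Z_{(1)}-\mathbf G\mathbf B_G\|_F^2$ is then a linear LS problem in $\mathbf G\in\mathbb C^{M_R\times N}$. We vectorize it via \eqref{eq:vecABC}:
\begin{equation*}
\mathrm{vec}(\mathbf G\mathbf B_G)=(\mathbf B_G^T\otimes\mathbf I_{M_R})\mathrm{vec}(\mathbf G).
\end{equation*}
Equivalently, we can work row by row, since each row $\mathbf g_r^T$ of $\mathbf G$ solves an independent problem $\min\|\mathbf z_r^T-\mathbf g_r^T\mathbf B_G\|_2^2$.

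The normal equations are $\mathbf G\,\mathbf B_G\mathbf B_G^H=\mathbf Z_{(1)}\mathbf B_G^H$. The key step is that $\operatorname{rank}(\mathbf B_G)=N$ with $\mathbf B_G\in\mathbb C^{N\times KM_T}$ means $\mathbf B_G$ has full row rank. Hence the $N\times N$ Gram matrix $\mathbf B_G\mathbf B_G^H$ is Hermitian positive definite, because $\mathbf x^H\mathbf B_G\mathbf B_G^H\mathbf x=\|\mathbf B_G^H\mathbf x\|^2>0$ for $\mathbf x\neq\mathbf 0$. So $J_G$ is strictly convex and has the unique minimizer
\begin{equation*}
\widehat{\mathbf G}=\mathbf Z_{(1)}\mathbf B_G^H(\mathbf B_G\mathbf B_G^H)^{-1}=\mathbf Z_{(1)}\mathbf B_G^+.
\end{equation*}
This is \eqref{eq:updateGtucker}, since the right pseudo-inverse of a full-row-rank matrix coincides with $\mathbf B_G^H(\mathbf B_G\mathbf B_G^H)^{-1}$.

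The same argument applied to $\mathbf Z_{(2)}$, $\mathbf H^T$ and $\mathbf B_H$ from \eqref{eq:81} gives the uniqueness of \eqref{eq:updateHtucker} when $\operatorname{rank}(\mathbf B_H)=N$.

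For the necessity part, note that $\operatorname{rank}(\mathbf B_G)\le\min(N,KM_T)$. So $\operatorname{rank}(\mathbf B_G)=N$ forces $KM_T\ge N$, and likewise $\operatorname{rank}(\mathbf B_H)=N$ forces $KM_R\ge N$. It is worth adding that when the rank condition fails, $\mathbf B_G$ has a nontrivial left null space. Any $\mathbf G+\mathbf N$ with the rows of $\mathbf N$ in that null space attains the same cost, so uniqueness is lost.

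There is no substantive obstacle; the only point needing care is bookkeeping about what "uniquely defined" means. Uniqueness holds for each LS subproblem with the companion factor held fixed. It does not address the global Tucker uniqueness of $(\mathbf G,\mathbf H)$, where the scalar ambiguity $\mathbf G\to\alpha\mathbf G$, $\mathbf H\to\alpha^{-1}\mathbf H$ persists.
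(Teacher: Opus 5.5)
Your proposal is correct and follows the paper's reasoning. The paper gives no formal proof beyond noting that the ALS updates are computed through right pseudo-inverses, so their uniqueness is governed by the rank of the design matrices $\mathbf B_G$ and $\mathbf B_H$. You make this explicit through the positive-definite Gram matrix $\mathbf B_G\mathbf B_G^H$ in the normal equations, and you obtain the dimension condition from $\operatorname{rank}(\mathbf B_G)\le\min(N,KM_T)$, which is the intended argument.
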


\begin{corollary}[Odd-layer sufficient full-rank conditions]
If $\mathbf S_{\textrm{R}}$ and $\mathbf S_{\textrm{L}}$ are nonsingular and it holds $\operatorname{rank}(\mathbf \Phi\diamond \mathbf Z_H)=N$ and 
$\operatorname{rank}(\mathbf \Phi\diamond \mathbf Z_G)=N$,
then, $\operatorname{rank}(\mathbf A_G)=\operatorname{rank}(\mathbf A_H)=N$.
This observation links the LS identifiability conditions directly to the SIM training design. Specifically, if the training matrix $\mathbf \Phi\in\mathbb{C}^{K\times N}$ is designed to have orthogonal columns and, hence, full column rank with $K\geq N$, while $\mathbf Z_H$ and $\mathbf Z_G$ are also full column rank, then the standard rank property of the Khatri--Rao product \cite{SidiropoulosBro2000} guarantees that the following conditions hold: 
\begin{equation}
\operatorname{rank}(\mathbf \Phi\diamond \mathbf Z_H)=\operatorname{rank}(\mathbf \Phi\diamond \mathbf Z_G)=N.\label{eq:ranksAGAH}
\end{equation}
Therefore, Proposition~1 is satisfied whenever the left cascade factors $\mathbf S_{\textrm{R}}$ and $\mathbf S_{\textrm{L}}$ do not reduce rank. This requirement is physically meaningful because these matrices are built from products of known SIM inter-layer propagation matrices and fixed diagonal SIM coefficient matrices. Under the Rayleigh--Sommerfeld interlayer propagation model, the coupling matrices are generically full rank for non-degenerate SIM geometries and element placements. Hence, if the corresponding diagonal SIM coefficient matrices also have nonzero diagonal entries, then $\mathbf S_{\textrm{R}}$ and $\mathbf S_{\textrm{L}}$ are nonsingular. Under these conditions, the LS updates in \eqref{eq:LS_G_odd} and \eqref{eq:LS_H_odd} are uniquely defined, and the conditions of Proposition~1 guarantee the uniqueness of the corresponding channel estimates.
\end{corollary}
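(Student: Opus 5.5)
The claim is that $\mathbf A_G$ has full row rank $N$. I would write $\mathbf A_G=\mathbf S_{\textrm{R}}\mathbf C_G^T$ with $\mathbf C_G\triangleq\mathbf \Phi\diamond \mathbf Z_H\in\mathbb C^{KM_T\times N}$, and likewise $\mathbf A_H=\mathbf S_{\textrm{L}}\mathbf C_H^T$ with $\mathbf C_H\triangleq\mathbf \Phi\diamond \mathbf Z_G\in\mathbb C^{KM_R\times N}$. The argument then reduces to two elementary facts: rank is invariant under multiplication by a nonsingular matrix, and transposition preserves rank.

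\textbf{Main implication.}
1. By hypothesis $\operatorname{rank}(\mathbf C_G)=N$, so $\operatorname{rank}(\mathbf C_G^T)=N$. Since $\mathbf C_G^T$ is $N\times KM_T$, it has full row rank.
2. Left-multiplication by the nonsingular $\mathbf S_{\textrm{R}}\in\mathbb C^{N\times N}$ preserves rank: $\operatorname{rank}(\mathbf S_{\textrm{R}}\mathbf C_G^T)=\operatorname{rank}(\mathbf C_G^T)$. To see this, note that $\mathbf S_{\textrm{R}}$ is injective, so the row space of the product equals that of $\mathbf C_G^T$. Hence $\operatorname{rank}(\mathbf A_G)=N$.
3. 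The same two steps with $\mathbf S_{\textrm{L}}$ and $\mathbf C_H$ give $\operatorname{rank}(\mathbf A_H)=N$.
4. Proposition~1 then applies directly, so the LS updates \eqref{eq:LS_G_odd} and \eqref{eq:LS_H_odd} are uniquely defined.

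\textbf{Training-design remark.} For the statement about \eqref{eq:ranksAGAH}, I would not rely on Kruskal-rank machinery. Instead I would use the Gram identity from Remark~2: $(\mathbf \Phi\diamond \mathbf Z_H)^H(\mathbf \Phi\diamond \mathbf Z_H)=(\mathbf \Phi^H\mathbf \Phi)\odot(\mathbf Z_H^H\mathbf Z_H)$.
1. Take $\mathbf \Phi$ in its $K\times N$ orientation, matching the Khatri--Rao column count $N$, with orthogonal columns, so $\mathbf \Phi^H\mathbf \Phi=c\,\mathbf I$ for some $c>0$.
2. The Gram matrix then becomes $c\,\mathrm{D}\big(\|\mathbf z_{H,1}\|^2,\ldots,\|\mathbf z_{H,N}\|^2\big)$.
3. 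This diagonal matrix is nonsingular as soon as every column of $\mathbf Z_H$ is nonzero, which full column rank certainly guarantees. Therefore $\operatorname{rank}(\mathbf \Phi\diamond \mathbf Z_H)=N$.
4. The identical argument applies to $\mathbf Z_G$. Alternatively, one can cite the Khatri--Rao rank bound of \cite{SidiropoulosBro2000}, $k(\mathbf \Phi\diamond\mathbf Z)\ge\min(k_{\mathbf \Phi}+k_{\mathbf Z}-1,N)$, which gives the same conclusion.

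\textbf{Physical part and main obstacle.} The physical claim is that $\mathbf S_{\textrm{R}}$ and $\mathbf S_{\textrm{L}}$ are nonsingular. Each is a product of diagonal $\mathbf \Phi_\ell$ matrices with nonzero (unit-modulus) diagonals and coupling matrices $\mathbf W_\ell$. Since a product of nonsingular matrices is nonsingular, the question reduces to whether each $\mathbf W_\ell$ is invertible. This is the only non-elementary step. My plan is to argue genericity rather than prove invertibility outright: $\det\mathbf W_\ell$ is a real-analytic function of the geometric parameters in \eqref{eq:rayleigh_sommerfeld} that is not identically zero, so its zero set has measure zero. I would state this as the assumed non-degenerate-geometry condition rather than prove it.
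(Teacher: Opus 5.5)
Your proof is correct. For the main implication it is the paper's argument: $\mathbf A_G=\mathbf S_{\textrm R}(\mathbf \Phi\diamond\mathbf Z_H)^T$ and $\mathbf A_H=\mathbf S_{\textrm L}(\mathbf \Phi\diamond\mathbf Z_G)^T$, and left-multiplication by a nonsingular $N\times N$ matrix preserves rank. Your treatment of the nonsingularity of $\mathbf S_{\textrm R}$ and $\mathbf S_{\textrm L}$ also matches the paper, which likewise asserts, rather than proves, generic invertibility of the Rayleigh--Sommerfeld coupling matrices.

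You genuinely differ only on the training-design claim \eqref{eq:ranksAGAH}. The paper appeals to the Khatri--Rao rank property of \cite{SidiropoulosBro2000}, while you use the Gram identity $(\mathbf\Phi\diamond\mathbf Z_H)^H(\mathbf\Phi\diamond\mathbf Z_H)=(\mathbf\Phi^H\mathbf\Phi)\odot(\mathbf Z_H^H\mathbf Z_H)$. The paper's route is more general, since it needs only full column rank of $\mathbf\Phi$ and not orthogonality. Yours uses the orthogonality the statement actually assumes, and in return gives an explicitly diagonal Gram matrix. That yields not just the rank but the exact conditioning of the Khatri--Rao factor, which ties directly to the numerical-stability point of Remark~2.

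Both routes show that the full-column-rank hypothesis on $\mathbf Z_H$ and $\mathbf Z_G$ is stronger than needed: nonzero columns suffice. One small precision: orthogonal columns only make $\mathbf\Phi^H\mathbf\Phi$ a positive diagonal matrix. The specific form $\mathbf\Phi^H\mathbf\Phi=K\mathbf I$ comes from the unit-modulus constraint, and your argument works verbatim with any positive diagonal.
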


\begin{corollary}[Even-layer sufficient full-rank conditions]
If in \eqref{eq:80} and \eqref{eq:81} holds that: $\operatorname{rank}(\mathbf T_{\textrm R})=\operatorname{rank}(\mathbf T_{\textrm L})=N$, $\operatorname{rank}(\bar{\mathbf \Phi}\otimes \mathbf H^T\mathbf T_{\textrm L})\ge N$, $\operatorname{rank}(\bar{\mathbf \Phi}\otimes \mathbf G\mathbf T_{\textrm R})\ge N$, $\operatorname{rank}(\mathbf W_{(1)})\ge N$, and $\operatorname{rank}(\mathbf W_{(2)})\ge N$,
with no cancellation of the relevant $N$-dimensional subspaces after left multiplication by $\mathbf T_{\textrm R}\mathbf W_{(1)}$ and $\mathbf T_{\textrm L}\mathbf W_{(2)}$, then $\operatorname{rank}(\mathbf B_G)=\operatorname{rank}(\mathbf B_H)=N$.
This condition can be linked directly to the proposed design of the SIM training matrix. If $\bar{\mathbf \Phi}\in\mathbb{C}^{K\times N}$ is designed with orthogonal columns, and therefore has full column rank for $K\ge N$, then the Kronecker rank identity $\operatorname{rank}(\mathbf A\otimes\mathbf B)=\operatorname{rank}(\mathbf A)\operatorname{rank}(\mathbf B)$ implies that:
\begin{eqnarray}
&&\operatorname{rank}(\bar{\mathbf \Phi}\otimes \mathbf H^T\mathbf T_{\textrm L})=\operatorname{rank}(\bar{\mathbf \Phi})\operatorname{rank}(\mathbf H^T\mathbf T_{\textrm L})\ge N,\nonumber\\
&&\operatorname{rank}(\bar{\mathbf \Phi}\otimes \mathbf G\mathbf T_{\textrm R})=\operatorname{rank}(\bar{\mathbf \Phi})\operatorname{rank}(\mathbf G\mathbf T_{\textrm R})\ge N.\nonumber
\end{eqnarray}
Hence, the rank conditions on $\mathbf B_G$ and $\mathbf B_H$ are ensured whenever the cascade factors $\mathbf T_{\textrm R}$ and $\mathbf T_{\textrm L}$ are nonsingular and the unfoldings $\mathbf W_{(1)}$ and $\mathbf W_{(2)}$ preserve these $N$-dimensional subspaces. Since $\mathbf W_{(1)}$ and $\mathbf W_{(2)}$ are determined by the middle SIM coupling core, a sufficient physical condition is that this core has mode-1 and mode-2 unfoldings of rank at least $N$, and that these unfoldings do not annihilate the Kronecker-product subspaces above. By the same Rayleigh--Sommerfeld argument as in the odd-layer case, these unfoldings are generically full rank for non-degenerate SIM geometries and element placements. In addition, if the fixed left and right SIM sub-cascades are built from nonsingular inter-layer propagation matrices and diagonal SIM coefficient matrices with nonzero diagonal entries, then $\mathbf T_{\textrm R}$ and $\mathbf T_{\textrm L}$ are also nonsingular. Under these conditions, the LS updates in (\ref{eq:updateGtucker}) and (\ref{eq:updateHtucker}) are uniquely defined, and the conditions of Proposition~2 guarantee the uniqueness of the corresponding channel estimates.
\end{corollary}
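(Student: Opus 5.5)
The plan is to prove the rank claim for $\mathbf B_G$ in \eqref{eq:80} directly and then get $\mathbf B_H$ in \eqref{eq:81} by the same argument with modes $1\leftrightarrow 2$, $\mathbf T_{\textrm R}\leftrightarrow\mathbf T_{\textrm L}$ and $\mathbf H^T\leftrightarrow\mathbf G$ swapped. Write $\mathbf C_G\triangleq\bar{\mathbf\Phi}\otimes\mathbf H^T\mathbf T_{\textrm L}$, so that $\mathbf B_G=\mathbf T_{\textrm R}\,\mathbf W_{(1)}\,\mathbf C_G^T$. The upper bound $\operatorname{rank}(\mathbf B_G)\le N$ is immediate because $\mathbf B_G$ has $N$ rows. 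It therefore suffices to show $\operatorname{rank}(\mathbf B_G)\ge N$.

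\emph{Step 1 (removing the outer factor).} Since $\operatorname{rank}(\mathbf T_{\textrm R})=N$ and $\mathbf T_{\textrm R}$ is $N\times N$, it is invertible. Left multiplication by an invertible matrix preserves rank, so $\operatorname{rank}(\mathbf B_G)=\operatorname{rank}(\mathbf W_{(1)}\mathbf C_G^T)$.

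\emph{Step 2 (the core--regressor product).} Here I would use the exact rank formula for a product,
\begin{equation*}
\operatorname{rank}(\mathbf W_{(1)}\mathbf C_G^T)=\operatorname{rank}(\mathbf C_G^T)-\dim\big(\mathcal N(\mathbf W_{(1)})\cap\mathcal R(\mathbf C_G^T)\big),
\end{equation*}
rather than only Sylvester's inequality, which may be vacuous when the inner dimension is large. The hypotheses give $\operatorname{rank}(\mathbf C_G)\ge N$ and $\operatorname{rank}(\mathbf W_{(1)})\ge N$, so $\mathbf W_{(1)}$ has full row rank $N$. The ``no cancellation'' hypothesis is then formalized precisely as
\begin{equation*}
\dim\big(\mathcal N(\mathbf W_{(1)})\cap\mathcal R(\mathbf C_G^T)\big)\le \operatorname{rank}(\mathbf C_G)-N .
\end{equation*}
Equivalently, $\mathbf W_{(1)}$ maps $\mathcal R(\mathbf C_G^T)$ onto an $N$-dimensional subspace, i.e.\ onto all of $\mathbb C^N$. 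Substituting gives $\operatorname{rank}(\mathbf W_{(1)}\mathbf C_G^T)\ge N$, and combined with the upper bound this yields $\operatorname{rank}(\mathbf B_G)=N$. Proposition~2 then applies for the fixed companion factors.

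\emph{Step 3 (link to training design).} If $\bar{\mathbf\Phi}$ has full column rank, the identity $\operatorname{rank}(\mathbf A\otimes\mathbf B)=\operatorname{rank}(\mathbf A)\operatorname{rank}(\mathbf B)$ gives
\begin{equation*}
\operatorname{rank}(\mathbf C_G)=\operatorname{rank}(\bar{\mathbf\Phi})\operatorname{rank}(\mathbf H^T\mathbf T_{\textrm L})\ge N ,
\end{equation*}
provided $\mathbf H\neq\mathbf 0$. Thus the Kronecker hypothesis is met by the DFT-based design of Section~\ref{sec:training_design}. For nonsingularity of $\mathbf T_{\textrm R}$ and $\mathbf T_{\textrm L}$, I note that each is a product of unit-modulus (hence invertible) diagonal matrices and inter-layer matrices $\mathbf W_\ell$. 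Its determinant is therefore a product of the determinants of these factors, and it is nonzero whenever each $\mathbf W_\ell$ is nonsingular.

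\emph{Step 4 (genericity, and the main obstacle).} For the ``generically full rank'' claim about $\mathbf W_{(1)}$, $\mathbf W_{(2)}$ and the $\mathbf W_\ell$, I would argue as follows. Each relevant minor is an analytic function of the geometric parameters $d_{\mathrm{layer}}$, $d_{\mathrm{atom}}$, $\lambda$ through \eqref{eq:rayleigh_sommerfeld}. Such a function vanishes only on a measure-zero set unless it vanishes identically, and exhibiting a single non-degenerate configuration with a nonzero minor, for instance by a numerical check, excludes the identically-zero case.

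The genuine obstacle is that the no-cancellation condition is not implied by the other rank hypotheses, since two full-rank factors can still have a product of lower rank. I therefore expect to keep it as an explicit assumption, in the intersection-dimension form above. I would then justify it only generically: the set of $(\mathbf H,\mathbf T_{\textrm L})$ for which the $N\times N$ minors of $\mathbf W_{(1)}\mathbf C_G^T$ all vanish is algebraic. It is proper as long as one admissible choice gives a nonzero minor, so cancellation occurs only on a measure-zero set of channels.
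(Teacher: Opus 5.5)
Your proposal is correct and follows the paper's own inline justification essentially step for step: you invert $\mathbf T_{\textrm R}$ (resp.\ $\mathbf T_{\textrm L}$), let the explicitly assumed no-cancellation condition give rank $N$, tie the Kronecker hypothesis to the training design through $\operatorname{rank}(\mathbf A\otimes\mathbf B)=\operatorname{rank}(\mathbf A)\operatorname{rank}(\mathbf B)$, and obtain nonsingularity of the sub-cascades and genericity of the core unfoldings from the Rayleigh--Sommerfeld model. Your intersection-dimension formalization of ``no cancellation'', your remark that it is not implied by the other rank hypotheses, and the caveat $\mathbf H\neq\mathbf 0$ only make explicit what the paper leaves implicit; the one small slip is that when $K<N^2$ the DFT design of Section~\ref{sec:training_design} makes $\bar{\mathbf\Phi}\in\mathbb C^{K\times N^2}$ full \emph{row} rank $K$ rather than full column rank, although $\operatorname{rank}(\bar{\mathbf\Phi}\otimes\mathbf H^T\mathbf T_{\textrm L})=K\operatorname{rank}(\mathbf H^T\mathbf T_{\textrm L})\ge N$ still holds for $K\ge N$.
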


In summary, under the training design and cascade conditions discussed in Corollaries 1 and 2, it holds that:
\begin{equation}
\operatorname{rank}(\mathbf A_G)=\operatorname{rank}(\mathbf A_H)=\operatorname{rank}(\mathbf B_G)=\operatorname{rank}(\mathbf B_H)=N.\nonumber
\end{equation}
Therefore, the inequalities stated in Propositions~1 and~2 provide sufficient conditions for identifiability, which are useful for system design, since they explicitly involve key parameters such as the numbers of transmit ($M_T$) and receive ($M_R$) antennas, the number of SIM meta-atoms ($N$), and the number of training blocks ($K$).

\textit{Remark 4}: When $K<N$ (a practically relevant short-training regime), neither $\mathbf \Phi$ nor $\bar{\mathbf \Phi}$ can be full column rank and, therefore, the sufficient conditions above no longer apply directly. For the odd-layer model in \eqref{eq:JG} and \eqref{eq:JH}, one can still exploit the Khatri--Rao structure: if the companion factors have full column rank, then their Khatri--Rao product remains full column rank. Hence, even when $\mathbf \Phi$ is rank-deficient because $K<N$, identifiability can still be studied through alternative sufficient conditions involving full column rank of $\mathbf Z_H$ and $\mathbf Z_G$, together with nonsingular cascade factors, so that the design matrices in \eqref{eq:JG} and \eqref{eq:JH} remain full row rank. For the even-layer model, an analogous interpretation follows from the Kronecker-product structure. Although $\bar{\mathbf \Phi}$ is rank-deficient when $K<N$, the matrices $\bar{\mathbf \Phi}\otimes (\mathbf H^T\mathbf T_{\textrm L})$ and $\bar{\mathbf \Phi}\otimes (\mathbf G\mathbf T_{\textrm R})$ may still contain relevant $N$-dimensional subspaces, provided that the effective factors $\mathbf H^T\mathbf T_{\textrm L}$ and $\mathbf G\mathbf T_{\textrm R}$ are sufficiently rich and the products $\mathbf T_{\textrm R}\mathbf W_{(1)}$ and $\mathbf T_{\textrm L}\mathbf W_{(2)}$ preserve those subspaces. Therefore, in the short-training regime, identifiability should be characterized by structure-aware rank conditions on the physical channel factors, together with the fixed SIM sub-cascades and the middle coupling core, rather than by full-column-rank training alone.

\subsection{Scaling Ambiguities}
The ambiguity structure is fundamentally different in the odd- and even-layer SIM models. In this section, we characterize the remaining indeterminacies of the tensor factorizations induced by the proposed training protocol.

\paragraph*{Odd-Layer SIM (PARAFAC model)}
For the odd-layer architecture, the received tensor admits the PARAFAC representation:
\begin{equation}
\ten Z = \ten I_{3,N}
\times_1 \mathbf Z_G
\times_2 \mathbf Z_H
\times_3 \mathbf \Phi ,
\end{equation}
where $\mathbf Z_G=\mathbf G\mathbf S_{\textrm R}$ and
$\mathbf Z_H=\mathbf H^T\mathbf S_{\textrm L}$.

\begin{proposition}
Consider the PARAFAC model above, where the third factor matrix $\mathbf \Phi$ is known and fixed by the SIM training design.
Then, the factor matrices $\mathbf Z_G$ and $\mathbf Z_H$ are identifiable only up to a diagonal scaling transformation.
Specifically, for any nonsingular diagonal matrix $\mathbf\Lambda\in\mathbb C^{N\times N}$, the following equations hold:
\begin{equation}
\widehat{\mathbf Z}_G=\mathbf Z_G\mathbf\Lambda,
\qquad
\widehat{\mathbf Z}_H=\mathbf Z_H\mathbf\Lambda^{-1}.\label{eq:scaling_amb}
\end{equation}
\end{proposition}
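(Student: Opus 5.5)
The statement has two directions. First, every pair $(\mathbf Z_G\mathbf\Lambda,\mathbf Z_H\mathbf\Lambda^{-1})$ with nonsingular diagonal $\mathbf\Lambda$ reproduces the same noiseless tensor. Second, under the rank conditions of Corollary~1, every pair that reproduces the tensor with the same known $\mathbf\Phi$ has this form. I will prove these in order, working with frontal slices and then with the mode-3 unfolding.

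\emph{Invariance.} In the noiseless model, the $k$-th frontal slice of $\ten Z$ is $\ten Z_{..k}=\mathbf Z_G\,\mathrm{D}(\boldsymbol\phi_{k})\,\mathbf Z_H^T$, where $\boldsymbol\phi_k$ is the $k$-th training column. Substituting $\mathbf Z_G\mathbf\Lambda$ and $\mathbf Z_H\mathbf\Lambda^{-1}$ gives $\mathbf Z_G\mathbf\Lambda\,\mathrm{D}(\boldsymbol\phi_k)\,\mathbf\Lambda^{-1}\mathbf Z_H^T$. Diagonal matrices commute, so this equals $\ten Z_{..k}$ for every $k$. Equivalently, in the Khatri--Rao form, column $n$ of $\mathbf Z_H\diamond\mathbf Z_G$ is $\mathbf z_{H,n}\otimes\mathbf z_{G,n}$, and it becomes $(\lambda_n^{-1}\mathbf z_{H,n})\otimes(\lambda_n\mathbf z_{G,n})$, which is the same vector. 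Hence no observation can resolve $\mathbf\Lambda$.

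\emph{Only this ambiguity (case $\mathbf\Phi$ of full row rank $N$, i.e.\ $K\ge N$).} Suppose $(\mathbf Z_G',\mathbf Z_H')$ also fits, so that
\begin{equation}
\mathbf Z_{(3)}^T=(\mathbf Z_H\diamond\mathbf Z_G)\mathbf\Phi=(\mathbf Z_H'\diamond\mathbf Z_G')\mathbf\Phi .\nonumber
\end{equation}
Because $\mathbf\Phi$ is known and has a right inverse, right-multiplying by it gives $\mathbf Z_H\diamond\mathbf Z_G=\mathbf Z_H'\diamond\mathbf Z_G'$, column by column. Each column is a vectorized rank-one matrix, $\mathbf z_{H,n}\otimes\mathbf z_{G,n}=\mathrm{vec}(\mathbf z_{G,n}\mathbf z_{H,n}^T)$. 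A nonzero rank-one matrix determines its two vector factors up to reciprocal scalars. Therefore $\mathbf z'_{G,n}=\lambda_n\mathbf z_{G,n}$ and $\mathbf z'_{H,n}=\lambda_n^{-1}\mathbf z_{H,n}$, which is \eqref{eq:scaling_amb}. This step needs every column of $\mathbf Z_G$ and $\mathbf Z_H$ to be nonzero, which holds under the full-column-rank assumptions of Corollary~1. There is no permutation ambiguity here, because the column order is pinned by the known $\mathbf\Phi$.

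\emph{Case $K<N$ (the main obstacle).} Here $\mathbf\Phi$ has no right inverse, so the unfolding argument above fails. I would instead invoke Kruskal's essential-uniqueness condition,
\begin{equation}
k_{\mathbf Z_G}+k_{\mathbf Z_H}+k_{\mathbf\Phi}\ge 2N+2 ,\nonumber
\end{equation}
where $k$ denotes Kruskal rank. Under this condition the factors are unique up to a common column permutation $\mathbf\Pi$ and diagonal scalings $\mathbf\Lambda_1,\mathbf\Lambda_2,\mathbf\Lambda_3$ with $\mathbf\Lambda_1\mathbf\Lambda_2\mathbf\Lambda_3=\mathbf I$. The third-mode estimate must coincide with the fixed $\mathbf\Phi$, which gives $\mathbf\Phi\mathbf\Pi\mathbf\Lambda_3=\mathbf\Phi$. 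Since $k_{\mathbf\Phi}\ge2$, no two columns of $\mathbf\Phi$ are proportional, so $\mathbf\Pi=\mathbf I$ and $\mathbf\Lambda_3=\mathbf I$. It follows that $\mathbf\Lambda_2=\mathbf\Lambda_1^{-1}$, which is again the claimed form.

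The delicate part is stating precisely which hypothesis rules out ambiguities beyond diagonal scaling. I would state it as either full row rank of $\mathbf\Phi$ or the Kruskal condition, rather than leave it implicit.
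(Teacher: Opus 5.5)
Your proof is correct. It is also more complete than the paper, which states this proposition without proof. The paper relies implicitly on the essential uniqueness of PARAFAC recalled in the tensor preliminaries (uniqueness up to column permutation and scaling), together with the observation that a known third factor removes the permutation and the third-mode scaling. The invariance direction appears in the paper only later, when the scalings are said to cancel in $\widehat{\mathbf Z}_G\mathrm{Diag}(\boldsymbol\phi^{(r)})\widehat{\mathbf Z}_H^T$; that is exactly your commuting-diagonals computation.

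\emph{Case $K<N$.} This branch is the paper's implicit route made explicit through Kruskal's condition. Your deduction that the fixed $\mathbf\Phi$ forces $\mathbf\Pi=\mathbf I$ and $\mathbf\Lambda_3=\mathbf I$ is sound. Note that $k_{\mathbf\Phi}\ge 2$ follows automatically from the Kruskal inequality, since $k_{\mathbf Z_G},k_{\mathbf Z_H}\le N$.

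\emph{Case $K\ge N$.} This branch is a genuinely different and more elementary route. Right-inverting $\mathbf\Phi$ recovers $\mathbf Z_H\diamond\mathbf Z_G$ itself, and you then use uniqueness of rank-one factorizations column by column. This needs only nonzero columns of $\mathbf Z_G$ and $\mathbf Z_H$, which is strictly weaker than Kruskal's condition or the full-column-rank assumptions of Corollary~1, and it places no requirement on $M_T$ or $M_R$. The price is that $\mathbf\Phi$ must have rank $N$, so this argument cannot reach the short-training regime of Remark~4. That regime needs the Kruskal route, which requires $k_{\mathbf Z_G}+k_{\mathbf Z_H}\ge 2N+2-K$ and hence roughly $M_T+M_R>N+2$ antennas.

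\emph{Notation.} You switch conventions between the two cases. In the first, $\mathbf\Phi$ is the $N\times K$ matrix of Section~IV, so full row rank and right-multiplication make sense. In the Kruskal case, $k_{\mathbf\Phi}$, ``columns of $\mathbf\Phi$'' and $\mathbf\Phi\mathbf\Pi\mathbf\Lambda_3$ refer to the $K\times N$ mode-3 factor, i.e.\ $\mathbf\Phi^T$. The paper mixes these two conventions as well, but you should fix one. Also, ``full row rank $N$, i.e.\ $K\ge N$'' should read ``which requires $K\ge N$''.
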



The implication of this ambiguity is that the effective SIM-channel
factors $\mathbf Z_G$ and $\mathbf Z_H$ can only be determined up to
diagonal scaling.
Since $\mathbf Z_G=\mathbf G\mathbf S_{\textrm R}$ and
$\mathbf Z_H=\mathbf H^T\mathbf S_{\textrm L}$, this ambiguity cannot, in general, be transferred directly to the
physical channels $\mathbf G$ and $\mathbf H$. This happens because the diagonal matrix $\mathbf\Lambda$ does not commute with the SIM cascade matrices $\mathbf S_{\textrm R}$ and $\mathbf S_{\textrm L}$.

\vspace{0.3cm}

\paragraph*{Even-Layer SIM (Tucker model)}
For the even-layer architecture, the received tensor follows the
Tucker model in~\eqref{eq:ZtensorCP_even}:
\begin{equation}
\ten Z
=
\ten W_{M+1}
\times_1 \mathbf Z_G
\times_2 \mathbf Z_H
\times_3 \bar{\mathbf \Phi},\label{eq:Tuckereven}
\end{equation}
where the core tensor $\ten W_{M+1}$ and the training matrix
$\bar{\mathbf \Phi}$ are known.
The frontal slices of $\ten Z$ can therefore be written as:
\begin{equation}
\mathbf Z_k=\mathbf Z_G\,\mathbf Q_k\,\mathbf Z_H^T,
\qquad k=1,\ldots,K,
\end{equation}
where $\mathbf Q_k\triangleq \ten W_{M+1}\times_3 \bar{\boldsymbol{\phi}}_k^T \in \mathbb C^{N\times N}$, with $\bar{\boldsymbol{\phi}}_k$ denoting the $k$-th row of $\bar{\mathbf \Phi}$. Thus, the matrices $\mathbf Q_k$'s are fully known and represent the contraction of the known core tensor with the training applied at the two central layers.

\begin{proposition}
Let $\{\mathbf Q_k\}_{k=1}^K$ be the known matrices defined above.
Two pairs of matrices $(\mathbf Z_G,\mathbf Z_H)$ and
$(\widetilde{\mathbf Z}_G,\widetilde{\mathbf Z}_H)$ produce the
same tensor $\ten Z$ if and only if there exist invertible
matrices $\mathbf T,\mathbf S\in\mathbb C^{N\times N}$ such that: $\widetilde{\mathbf Z}_G=\mathbf Z_G\mathbf T$, $\widetilde{\mathbf Z}_H=\mathbf Z_H\mathbf S$,  
and $\mathbf T\,\mathbf Q_k\,\mathbf S^T=\mathbf Q_k$, $\forall k$.
\end{proposition}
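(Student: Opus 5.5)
The plan is to prove the two directions separately. The ``if'' direction is a direct substitution. The ``only if'' direction first shows that the two factorizations share the same column spaces and then cancels the outer factors. The statement does not say it, but the converse needs the same generic rank conditions used in Corollary~2. I will therefore assume that $\mathbf Z_G,\widetilde{\mathbf Z}_G\in\mathbb C^{M_R\times N}$ and $\mathbf Z_H,\widetilde{\mathbf Z}_H\in\mathbb C^{M_T\times N}$ have full column rank $N$ (so $M_R,M_T\ge N$). I will also assume that the known contracted cores satisfy $\operatorname{rank}[\mathbf Q_1,\ldots,\mathbf Q_K]=\operatorname{rank}[\mathbf Q_1^T,\ldots,\mathbf Q_K^T]=N$, which is the rank condition on $\mathbf W_{(1)},\mathbf W_{(2)}$ combined with the training $\bar{\mathbf\Phi}$.

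For sufficiency, I take invertible $\mathbf T,\mathbf S$ with $\widetilde{\mathbf Z}_G=\mathbf Z_G\mathbf T$, $\widetilde{\mathbf Z}_H=\mathbf Z_H\mathbf S$ and $\mathbf T\mathbf Q_k\mathbf S^T=\mathbf Q_k$. Substituting into the slice form gives
$\widetilde{\mathbf Z}_G\mathbf Q_k\widetilde{\mathbf Z}_H^T=\mathbf Z_G(\mathbf T\mathbf Q_k\mathbf S^T)\mathbf Z_H^T=\mathbf Z_G\mathbf Q_k\mathbf Z_H^T=\mathbf Z_k$ for every $k$. Hence both pairs generate the same frontal slices and therefore the same tensor $\ten Z$.

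For necessity, I will use the mode-1 unfolding written through the frontal slices:
$[\mathbf Z_1,\ldots,\mathbf Z_K]=\mathbf Z_G[\mathbf Q_1,\ldots,\mathbf Q_K](\mathbf I_K\otimes\mathbf Z_H^T)$.
Since $\mathbf Z_H^T$ has full row rank, $\mathbf I_K\otimes\mathbf Z_H^T$ is surjective. The middle block matrix has rank $N$, so the product on its right has rank $N$. Because $\mathbf Z_G$ has full column rank, the whole unfolding then has column space equal to $\mathrm{range}(\mathbf Z_G)$. The same argument applied to the tilde pair shows that this column space also equals $\mathrm{range}(\widetilde{\mathbf Z}_G)$. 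Consequently $\widetilde{\mathbf Z}_G=\mathbf Z_G\mathbf T$ with $\mathbf T\triangleq\mathbf Z_G^+\widetilde{\mathbf Z}_G$, and $\mathbf T$ is invertible since both matrices have rank $N$. Repeating the argument on the transposed slices $\mathbf Z_k^T=\mathbf Z_H\mathbf Q_k^T\mathbf Z_G^T$ gives $\widetilde{\mathbf Z}_H=\mathbf Z_H\mathbf S$ with $\mathbf S$ invertible. Finally, equality of the slices reads $\mathbf Z_G(\mathbf T\mathbf Q_k\mathbf S^T-\mathbf Q_k)\mathbf Z_H^T=\mathbf 0$. Left-multiplying by $\mathbf Z_G^+$ and right-multiplying by $(\mathbf Z_H^T)^+$, both of which are valid cancellations under full column rank, yields $\mathbf T\mathbf Q_k\mathbf S^T=\mathbf Q_k$ for all $k$.

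The main obstacle is the column-space step. It is exactly where the unstated rank hypotheses enter, and without them the converse can fail: for example, $\mathbf T$ need not exist if $\mathbf Z_G$ is rank-deficient. I would therefore state these assumptions explicitly, justify them generically by the same Rayleigh--Sommerfeld and full-rank-training reasoning as in Corollary~2, and note that they are implied by the conditions of Proposition~2. The remaining steps are routine pseudo-inverse manipulations.
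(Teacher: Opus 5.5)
Your proof is correct and has the same skeleton as the paper's: equate the frontal slices, write $\widetilde{\mathbf Z}_G=\mathbf Z_G\mathbf T$ and $\widetilde{\mathbf Z}_H=\mathbf Z_H\mathbf S$, then cancel the outer factors. The difference is the middle step.

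The paper simply asserts that full column rank of $\mathbf Z_G$ and $\mathbf Z_H$ yields such invertible $\mathbf T,\mathbf S$. You instead derive $\operatorname{range}(\widetilde{\mathbf Z}_G)=\operatorname{range}(\mathbf Z_G)$ from the unfolding $\mathbf Z_G[\mathbf Q_1,\ldots,\mathbf Q_K](\mathbf I_K\otimes\mathbf Z_H^T)$. For this you use the extra hypothesis $\operatorname{rank}[\mathbf Q_1,\ldots,\mathbf Q_K]=\operatorname{rank}[\mathbf Q_1^T,\ldots,\mathbf Q_K^T]=N$.

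That hypothesis is genuinely needed. Suppose every $\mathbf Q_k$ had a zero last row and $M_R>N$. Then the last column of $\mathbf Z_G$ never enters $\ten Z$. Replacing it by a vector outside $\operatorname{range}(\mathbf Z_G)$ gives a full-column-rank $\widetilde{\mathbf Z}_G$ that produces the same tensor, yet no $\mathbf T$ with $\widetilde{\mathbf Z}_G=\mathbf Z_G\mathbf T$ exists.

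Your route therefore repairs a step the paper leaves unjustified. The price is an assumption that is harmless in the physical model, since $\mathbf Q_k=\mathrm{Diag}(\boldsymbol\phi_{M+1,k})\mathbf W_{M+1}\mathrm{Diag}(\boldsymbol\phi_{M,k})$ is nonsingular whenever $\mathbf W_{M+1}$ is. You also write out the easy ``if'' direction, which the paper omits.

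Two minor points.

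First, you need not assume full column rank of the tilde pair. Under your hypotheses on the original pair, the mode-1 and mode-2 unfoldings already have rank $N$. This forces $\operatorname{rank}(\widetilde{\mathbf Z}_G)=\operatorname{rank}(\widetilde{\mathbf Z}_H)=N$.

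Second, your closing claim that the hypotheses follow from Proposition~2 is only half right. Since $\mathbf B_G=\mathbf T_{\textrm R}[\mathbf Q_1,\ldots,\mathbf Q_K](\mathbf I_K\otimes\mathbf Z_H^T)$ (and analogously for $\mathbf B_H$), the condition $\operatorname{rank}(\mathbf B_G)=\operatorname{rank}(\mathbf B_H)=N$ does imply your stacked-$\mathbf Q_k$ conditions. It does not imply full column rank of $\mathbf Z_G$ and $\mathbf Z_H$: the varying $\mathbf Q_k$ can map a deficient $\operatorname{range}(\mathbf Z_H^T)$ onto subspaces that jointly span $\mathbb C^N$. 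That part must remain a separate assumption, as it is in the paper.
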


\begin{proof}If the two pairs of matrices generate the same tensor,
their frontal slices must coincide, i.e., 
$\widetilde{\mathbf Z}_G\mathbf Q_k\widetilde{\mathbf Z}_H^T
=
\mathbf Z_G\mathbf Q_k\mathbf Z_H^T$.
Since $\mathbf Z_G$ and $\mathbf Z_H$ are full-column-rank under
the identifiability conditions assumed in the estimation stage,
there exist invertible matrices $\mathbf T$ and $\mathbf S$ such that $\widetilde{\mathbf Z}_G=\mathbf Z_G\mathbf T$ and $\widetilde{\mathbf Z}_H=\mathbf Z_H\mathbf S$.
Substituting into the equality above yields
\begin{equation}
\mathbf Z_G\mathbf T\mathbf Q_k\mathbf S^T\mathbf Z_H^T
=
\mathbf Z_G\mathbf Q_k\mathbf Z_H^T ,
\end{equation}
which implies $\mathbf T\mathbf Q_k\mathbf S^T=\mathbf Q_k $, $\forall k$.
\end{proof}

A relevant case arises when $\mathbf S^T=\mathbf{T}^{-1}$, which preserves the bilinear structure of each slice through a single change of basis acting consistently on the latent dimension shared by $\mathbf Z_G$ and $\mathbf Z_H$. 
In this case, the invariance condition becomes the commutation relation $\mathbf T\mathbf Q_k=\mathbf Q_k\mathbf T,
\qquad \forall k$.
Using $\mathrm{vec}(\mathbf A\mathbf X\mathbf B)
(\mathbf B^T\otimes\mathbf A)\mathrm{vec}(\mathbf X)$, this condition can be written as follows:
\begin{equation}
(\mathbf I\otimes\mathbf Q_k-\mathbf Q_k^T\otimes\mathbf I)
\mathrm{vec}(\mathbf T)=\mathbf 0.
\end{equation}
Stacking the equations $\forall k$ yields:
\begin{equation}
\mathbf C_Q\,\mathrm{vec}(\mathbf T)=\mathbf 0,
\end{equation}
where
\begin{equation}
\mathbf C_Q=
\begin{bmatrix}
\mathbf I\otimes\mathbf Q_1-\mathbf Q_1^T\otimes\mathbf I\\
\mathbf I\otimes\mathbf Q_2-\mathbf Q_2^T\otimes\mathbf I\\
\vdots\\
\mathbf I\otimes\mathbf Q_K-\mathbf Q_K^T\otimes\mathbf I
\end{bmatrix}.
\end{equation}
Therefore, the admissible transformations are precisely those whose vectorized form lies in the nullspace of $\mathbf C_Q$.

\begin{corollary}
If $\dim\ker(\mathbf C_Q)=1$, the only admissible transformation
is $\mathbf T=\alpha\mathbf I$, implying that the $\mathbf Z_G$ and $\mathbf Z_H$ in the even-layer Tucker model (\ref{eq:ZtensorCP_even}) are identifiable up to a single global scalar factor.
\end{corollary}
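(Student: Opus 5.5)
The plan is to derive the corollary from the nullspace characterization that precedes it. We restrict to transformations with $\mathbf S^T=\mathbf T^{-1}$, so that admissibility is equivalent to $\mathbf C_Q\,\mathrm{vec}(\mathbf T)=\mathbf 0$. The first step is to exhibit a guaranteed nonzero element of $\ker(\mathbf C_Q)$. Since $\mathbf I\,\mathbf Q_k=\mathbf Q_k\,\mathbf I$ for every $k$, each block satisfies $(\mathbf I\otimes\mathbf Q_k-\mathbf Q_k^T\otimes\mathbf I)\mathrm{vec}(\mathbf I)=\mathrm{vec}(\mathbf Q_k-\mathbf Q_k)=\mathbf 0$. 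Hence $\mathrm{vec}(\mathbf I_N)\in\ker(\mathbf C_Q)$, and this vector is nonzero.

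The second step is a dimension count. If $\dim\ker(\mathbf C_Q)=1$, the kernel is exactly $\mathrm{span}\{\mathrm{vec}(\mathbf I_N)\}$. So every admissible $\mathbf T$ satisfies $\mathrm{vec}(\mathbf T)=\alpha\,\mathrm{vec}(\mathbf I_N)$, that is, $\mathbf T=\alpha\mathbf I$. Invertibility of $\mathbf T$, required in the preceding proposition, forces $\alpha\neq 0$. Then $\mathbf S^T=\mathbf T^{-1}=\alpha^{-1}\mathbf I$, which gives $\widetilde{\mathbf Z}_G=\alpha\mathbf Z_G$ and $\widetilde{\mathbf Z}_H=\alpha^{-1}\mathbf Z_H$. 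This is a single global scalar ambiguity, and it cancels in each slice $\mathbf Z_G\mathbf Q_k\mathbf Z_H^T$.

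The main obstacle is that the proposition allows general pairs $(\mathbf T,\mathbf S)$ with $\mathbf T\mathbf Q_k\mathbf S^T=\mathbf Q_k$, not only pairs with $\mathbf S^T=\mathbf T^{-1}$. I would close this gap by assuming that some slice, or some known linear combination of slices, is invertible; call it $\mathbf Q_{k_0}$. This is generic given full-rank $\mathbf W_{M+1}$ and unit-modulus training. From $\mathbf T\mathbf Q_{k_0}\mathbf S^T=\mathbf Q_{k_0}$ we get $\mathbf S^T=\mathbf Q_{k_0}^{-1}\mathbf T^{-1}\mathbf Q_{k_0}$. Substituting into the other equations gives $\mathbf T(\mathbf Q_k\mathbf Q_{k_0}^{-1})=(\mathbf Q_k\mathbf Q_{k_0}^{-1})\mathbf T$ for every $k$. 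This is the same commutation structure with $\mathbf Q_k$ replaced by $\mathbf Q_k\mathbf Q_{k_0}^{-1}$, and it can be handled by the identical kernel argument. If this generalization is not pursued, the corollary should be read, as the surrounding text indicates, within the class $\mathbf S^T=\mathbf T^{-1}$, where the two steps above complete the proof.
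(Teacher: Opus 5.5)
Your two-step argument is correct and matches the paper's reasoning. The paper gives no separate proof: it reads the corollary off the preceding reduction, where, within the class $\mathbf S^T=\mathbf T^{-1}$, the admissible transformations are exactly those with $\mathrm{vec}(\mathbf T)\in\ker(\mathbf C_Q)$. Your observation that $\mathrm{vec}(\mathbf I_N)$ always lies in this kernel is the one fact the paper leaves implicit. With it, $\dim\ker(\mathbf C_Q)=1$ forces $\mathbf T=\alpha\mathbf I$, with $\alpha\neq 0$ and $\mathbf S=\alpha^{-1}\mathbf I$.

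Your proposed extension to general pairs $(\mathbf T,\mathbf S)$, however, does not follow from the stated hypothesis. After eliminating $\mathbf S^T=\mathbf Q_{k_0}^{-1}\mathbf T^{-1}\mathbf Q_{k_0}$, you need $\mathbf T$ to commute with $\mathbf Q_k'=\mathbf Q_k\mathbf Q_{k_0}^{-1}$. That places $\mathrm{vec}(\mathbf T)$ in the kernel of the matrix built like $\mathbf C_Q$ from the $\mathbf Q_k'$, which is a different matrix. The hypothesis $\dim\ker(\mathbf C_Q)=1$ says nothing about it, so the ``identical kernel argument'' needs a new, separate hypothesis.

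The general-pair conclusion can in fact fail. Take $N=3$, $\mathbf Q_1=\mathbf P$ the cyclic shift, and $\mathbf Q_2=\mathbf D\mathbf P$ with $\mathbf D=\mathrm{Diag}(1,1,-1)$.
\begin{itemize}
\item Since $\mathbf P$ is invertible, $\mathbf C_Q\mathrm{vec}(\mathbf T)=\mathbf 0$ means $\mathbf T$ is a circulant that also commutes with $\mathbf D$. Only scalars qualify, so $\dim\ker(\mathbf C_Q)=1$.
\item Yet any $\mathbf T=\mathrm{blkdiag}(\mathbf B,1)$ with $\mathbf B$ invertible and non-scalar, together with $\mathbf S^T=\mathbf P^{-1}\mathbf T^{-1}\mathbf P$, satisfies $\mathbf T\mathbf Q_k\mathbf S^T=\mathbf Q_k$ for $k=1,2$.
\end{itemize}
This gives a non-scalar ambiguity. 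The structure $\mathbf Q_2=\mathbf D\mathbf Q_1$ is also physically realizable: keep layer $M$ fixed and flip one phase of layer $M+1$ between two blocks.

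So the corollary must stay within the class $\mathbf S^T=\mathbf T^{-1}$, exactly as in your fallback reading, and there your proof is complete. Incidentally, $\mathbf Q_k=\mathrm{Diag}(\boldsymbol\phi_{M+1,k})\mathbf W_{M+1}\mathrm{Diag}(\boldsymbol\phi_{M,k})$. Hence every $\mathbf Q_k$ is invertible exactly when $\mathbf W_{M+1}$ is, not merely generically.
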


The nullity condition can be linked explicitly to the physical model. For a fixed middle propagation matrix $\mathbf W_{M+1}$, generated here by the Rayleigh--Sommerfeld model, each slice $\mathbf {Q}_{k} $ is obtained by contracting the known core tensor $\ten W_{M+1}$ with the training vector associated with the two central layers. Hence, the entries of $\mathbf Q_k$ depend polynomially and, in fact, linearly on the entries of $\mathbf W_{M+1}$ and on the components of the phase vector $\bar{\boldsymbol\phi}_k$. Consequently, the condition $\dim\ker(\mathbf C_Q)>1$ is equivalent to the existence of a non-scalar matrix $\mathbf T$ satisfying $\mathbf T\mathbf Q_k=\mathbf Q_k\mathbf T$, $\forall k$, which imposes polynomial constraints on the entries of $\mathbf W_{M+1}$ and on the training phases.

\begin{corollary}
Assume that the Rayleigh--Sommerfeld matrix $\mathbf W_{M+1}$ is fixed and nondegenerate.
If the phase vectors of the two central layers are drawn independently from an absolutely continuous distribution on the unit circle, then, with probability one, the resulting matrices $\{\mathbf Q_k\}_{k=1}^K$ satisfy
$\dim\ker(\mathbf C_Q)=1$.
Equivalently, the residual ambiguity of the even-layer Tucker model reduces almost surely to a single global scalar factor under generic random phase training.
\end{corollary}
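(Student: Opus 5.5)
The plan is a standard genericity argument: the ``bad'' training configurations form the common zero set of a family of real-analytic functions of the phase angles, and we exhibit one configuration outside that set.

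\emph{Step 1: reduce to a minor condition.} Throughout, write $\mathbf a_k\triangleq\boldsymbol\phi_{M+1,k}$ and $\mathbf b_k\triangleq\boldsymbol\phi_{M,k}$. From $[\ten W_{M+1}]_{(3)}=\mathrm{Diag}(\mathrm{vec}(\mathbf W_{M+1}))$ and $\bar{\boldsymbol\phi}_k=\mathbf b_k\otimes\mathbf a_k$, we get $\mathrm{vec}(\mathbf Q_k)=\mathrm{Diag}(\mathrm{vec}(\mathbf W_{M+1}))(\mathbf b_k\otimes\mathbf a_k)$. Since $(\mathbf b\otimes\mathbf a)\odot\mathrm{vec}(\mathbf W)=\mathrm{vec}(\mathrm D(\mathbf a)\mathbf W\mathrm D(\mathbf b))$, this gives
\begin{equation}
\mathbf Q_k=\mathrm D(\mathbf a_k)\,\mathbf W_{M+1}\,\mathrm D(\mathbf b_k),
\end{equation}
so each $\mathbf Q_k$ depends polynomially on $e^{j\theta}$ for the phases of the two central layers. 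The identity always commutes with every $\mathbf Q_k$, hence $\mathrm{vec}(\mathbf I)\in\ker(\mathbf C_Q)$ and $\dim\ker(\mathbf C_Q)\ge 1$. Therefore $\dim\ker(\mathbf C_Q)=1$ is equivalent to $\operatorname{rank}(\mathbf C_Q)=N^2-1$, i.e., to some $(N^2-1)\times(N^2-1)$ minor of $\mathbf C_Q$ being nonzero. Each minor is a trigonometric polynomial, hence real-analytic, in the $2NK$ phase angles. The bad set is the common zero set of these minors. If at least one minor is not identically zero, its zero set has Lebesgue measure zero in $[0,2\pi)^{2NK}$. Absolute continuity of the phase distribution then yields the claim with probability one.

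\emph{Step 2: exhibit one good configuration.} It remains to show that some minor is not identically zero, i.e., that for some phase choice the commutant of $\{\mathbf Q_k\}$ is exactly $\{\alpha\mathbf I\}$.

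First consider the whole family $\mathcal F=\{\mathrm D(\mathbf a)\mathbf W_{M+1}\mathrm D(\mathbf b):|a_n|=|b_n|=1\}$. The linear span of unit-modulus vectors is all of $\mathbb C^N$, and $\mathrm D(\mathbf a)\mathbf W\mathrm D(\mathbf b)$ is bilinear in $(\mathbf a,\mathbf b)$. Hence $\mathrm{span}(\mathcal F)$ contains $\mathbf E_{ii}\mathbf W_{M+1}\mathbf E_{jj}=[\mathbf W_{M+1}]_{ij}\mathbf E_{ij}$ for every pair $(i,j)$. I interpret ``nondegenerate'' as requiring all entries of $\mathbf W_{M+1}$ to be nonzero. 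This is the case for the Rayleigh--Sommerfeld kernel in \eqref{eq:rayleigh_sommerfeld}: its modulus is $d_xd_yd_{\mathrm{layer}}d^{-2}\sqrt{(2\pi d)^{-2}+\lambda^{-2}}>0$. Under this assumption the span contains every matrix unit $\mathbf E_{ij}$, so any $\mathbf T$ commuting with all of $\mathcal F$ commutes with all of $\mathbb C^{N\times N}$ and is scalar.

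Next, pass to finitely many matrices. The commutant of $\mathcal F$ is the intersection of the subspaces $\ker(\mathbf I\otimes\mathbf Q-\mathbf Q^T\otimes\mathbf I)$ over $\mathbf Q\in\mathcal F$. A descending chain of subspaces of $\mathbb C^{N^2}$ stabilizes after at most $N^2-1$ proper drops. Therefore finitely many members $\mathbf Q^{(1)},\dots,\mathbf Q^{(K_0)}\in\mathcal F$, with $K_0\le N^2-1$, already have scalar commutant. Assigning these to blocks $1,\dots,K_0$, with arbitrary phases for any remaining blocks since extra rows only shrink the kernel, gives a point where $\operatorname{rank}(\mathbf C_Q)=N^2-1$. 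This completes Step 1.

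\emph{Main obstacle.} The delicate point is the number of blocks $K$. For $K=1$ the statement is false: every polynomial in $\mathbf Q_1$ commutes with it, so the kernel has dimension at least $N$. Hence the argument needs $K\ge K_0$, and I would make this hypothesis explicit.

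To sharpen it toward $K=2$, I would try a direct witness. First choose $(\mathbf a_1,\mathbf b_1)$ so that $\mathbf Q_1$ has distinct eigenvalues, with eigenvector matrix $\mathbf V$. Its commutant is then $\{\mathbf V\mathbf D\mathbf V^{-1}:\mathbf D\ \text{diagonal}\}$. Next choose $\mathbf Q_2$ so that $\mathbf V^{-1}\mathbf Q_2\mathbf V$ has a connected off-diagonal support graph, which forces $\mathbf D$ to be scalar. Both requirements are open conditions, but verifying that they are nonvacuous for the Rayleigh--Sommerfeld matrix needs extra work. So I would keep the finite-$K_0$ argument as the rigorous route and state that $K\ge 2$ suffices only conjecturally.
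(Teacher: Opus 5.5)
Your proposal is correct and uses the same skeleton as the paper's proof. Both reduce $\dim\ker(\mathbf C_Q)>1$ to the simultaneous vanishing of the $(N^2-1)\times(N^2-1)$ minors of $\mathbf C_Q$, which are polynomial in the $e^{j\theta}$. Both then conclude by absolute continuity, once one minor is known not to vanish identically.

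The difference is at that crux. The paper only asserts that ``under the nondegeneracy assumption on $\mathbf W_{M+1}$, these minors are not all identically zero.'' You prove it:
\begin{itemize}
\item you write $\mathbf Q_k=\mathrm D(\boldsymbol\phi_{M+1,k})\mathbf W_{M+1}\mathrm D(\boldsymbol\phi_{M,k})$;
\item you show that the span of all such matrices contains each $[\mathbf W_{M+1}]_{ij}\mathbf E_{ij}$;
\item you use that every Rayleigh--Sommerfeld entry has strictly positive modulus;
\item you extract a finite witness family with $K_0\le N^2-1$ members.
\end{itemize}

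This route gives two things the paper's argument does not.
\begin{itemize}
\item It shows the operative hypothesis is that $\mathbf W_{M+1}$ has no zero entries, which is automatic for the Rayleigh--Sommerfeld kernel. Invertibility alone would not suffice: a nonsingular diagonal $\mathbf W_{M+1}$ makes every $\mathbf Q_k$ diagonal, so every diagonal $\mathbf T$ is admissible.
\item It exposes a missing hypothesis on $K$. For $K=1$ and $N\ge 2$ the kernel is always at least two-dimensional, so the corollary as stated, with $K$ unrestricted, is false there. The paper's unproven assertion therefore cannot hold in general. Your argument establishes the claim for $K\ge K_0(\mathbf W_{M+1})$, and in particular for every $K\ge N^2-1$.
\end{itemize}

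One minor correction. The polynomials in $\mathbf Q_1$ span a space whose dimension is the degree of its minimal polynomial, which can be smaller than $N$. So $\dim\ker\ge N$ needs the classical centralizer-dimension bound instead. For your counterexample, $\mathrm{span}\{\mathbf I,\mathbf Q_1\}$ already gives dimension at least two, or all of $\mathbb C^{N\times N}$ when $\mathbf Q_1$ is scalar.
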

\begin{proof}
For fixed $\mathbf W_{M+1}$, the entries of $\mathbf C_Q$ are polynomial functions of the random phase variables. The event $\dim\ker(\mathbf C_Q)>1$ is equivalent to the vanishing of all minors of $\mathbf C_Q$ of the maximal rank compatible with one-dimensional nullity and is, therefore, an algebraic event in the training phases. Under the nondegeneracy assumption on $\mathbf W_{M+1}$, these minors are not all identically zero, so the exceptional set where $\dim\ker(\mathbf C_Q)>1$ is a proper algebraic subset of the phase-design space. Since the phase vectors are drawn from an absolutely continuous distribution, this subset has probability zero. Thus, $\dim\ker(\mathbf C_Q)=1$ almost surely.
\end{proof}

\begin{table*}[!t]
\centering
\small
\caption{The Proposed TenSIM Channel Estimation Methods: PARAFAC (odd-layer case) vs. Tucker (even-layer case).}
\label{tab:odd_even_comparison}
\vspace{1ex}
\begin{tabular}{>{\raggedright\arraybackslash}p{0.20\textwidth}>{\raggedright\arraybackslash}p{0.33\textwidth}>{\raggedright\arraybackslash}p{0.33\textwidth}}
\hline
\textbf{Aspect} & \textbf{TenSIM-PARAFAC} ($L=2M+1$) & \textbf{TenSIM-Tucker} ($L=2M$) \\
\hline
Tensor model & PARAFAC with identity core & Tucker with nontrivial core $\ten W_{M+1}$ \\
Training-varying layers & Single middle layer ($M+1$) & Two middle layers ($M$ and $M+1$) \\
\revblue{Main estimated quantities} & \revblue{Effective factors $\mathbf Z_G,\mathbf Z_H$ and cascaded-channel reconstructions} & \revblue{Physical/effective channel factors $\mathbf G,\mathbf H$ or $\mathbf Z_G,\mathbf Z_H$ and cascaded-channel reconstructions} \\
Identifiability behavior & Essentially unique (diagonal scaling factor) & Essentially unique (global scalar factor)\\
Computational profile & Lower per-iteration cost and memory & Higher per-iteration cost  \\
\hline
\end{tabular}
\end{table*}

\begin{table*}[!t]
\centering
\setlength{\tabcolsep}{3pt}
\renewcommand{\arraystretch}{1.12}
\small
\caption{\revblue{Comparison between the proposed TenSIM estimators and LS reference methods.}}
\label{tab:method_baseline_comparison}
\vspace{1ex}
\begin{tabular}{>{\raggedright\arraybackslash}p{0.16\textwidth}>{\raggedright\arraybackslash}p{0.28\textwidth}>{\raggedright\arraybackslash}p{0.18\textwidth}>{\raggedright\arraybackslash}p{0.18\textwidth}>{\raggedright\arraybackslash}p{0.10\textwidth}}
\hline
\revblue{\textbf{Method}} & \revblue{\textbf{Required CSI/side information}} & \revblue{\textbf{SIM structure exploited?}} & \revblue{\textbf{Estimated variables}} & \revblue{\textbf{Works for $K<N^2$?}} \\
\hline
\revblue{TenSIM-PARAFAC} & \revblue{Known pilots, central-layer training factor, calibrated fixed sub-cascades/coupling} & \revblue{Yes} & \revblue{$\mathbf Z_G,\mathbf Z_H$ and cascaded channel} & \revblue{Yes} \\
\revblue{TenSIM-Tucker} & \revblue{Known pilots, two-layer training factor, known Tucker core/fixed sub-cascades} & \revblue{Yes} & \revblue{$\mathbf G,\mathbf H$ or $\mathbf Z_G,\mathbf Z_H$ and cascaded channel} & \revblue{Yes, if rank conditions hold} \\
\revblue{Baseline LS~\cite{CEWCL2024}} & \revblue{Known Tx-SIM channel/subchannel information} & \revblue{Partially} & \revblue{Remaining channel factor or cascaded response} & \revblue{Only with idealized CSI} \\
\revblue{Aggregate LS} & \revblue{Known aggregate SIM training matrix} & \revblue{No} & \revblue{Composite matrix $\mathbf T$} & \revblue{No} \\
\hline
\end{tabular}
\end{table*}

Note that from a system perspective, the scaling ambiguities are not relevant, since the end-to-end SIM channel depends on the matrix product $\mathbf{G}\mathbf{S}\mathbf{H}$, where these scaling factors cancel out. 
The main differences between the odd- and even-layer estimation methods are summarized in Table~\ref{tab:odd_even_comparison}. In short, the odd-layer formulation yields a simpler PARAFAC model with lower complexity, whereas the even-layer formulation yields a Tucker model with richer structure but higher computational cost. \revblue{Finally, Table~\ref{tab:method_baseline_comparison} compares the proposed TenSIM estimators with competing LS baselines, considering assumptions, estimated variables, training requirements, and applicability regimes.}

\section{Numerical Results and Discussion}
This section presents simulation results assessing the performance of the proposed SIM channel estimation methods. The reported figures illustrate the SNR-dependent normalized mean squared error (NMSE) behavior for both odd- and even-layer SIM configurations, the impact of the number of pilot blocks used for channel estimation, the ALS convergence and complexity trends, and the sensitivity to the inter-layer spacing parameter.

\subsection{Simulation Setup}
We have considered a SIM-assisted MIMO communication system following the signal model presented in Section~III. Since the numerical study evaluates several operating regimes, the antenna dimensions $(M_T,M_R)$, the number of SIM elements $N=N_xN_y$, the number of pilot blocks $K$, and the inter-layer spacing $d_{\mathrm{layer}}$ are varied across the experiments. Unless otherwise stated, the SIM was square, i.e., $N_x=N_y$, and the pilot matrix was chosen to be orthogonal, i.e., $\mathbf X\mathbf X^H=\mathbf I_{M_T}$, so that the matched-filtered observations were formed according to \eqref{eq:Zk}. The Tx-to-SIM, $\mathbf H$, and SIM-to-Rx, $\mathbf G$, channels were generated as independent and identically distributed Rayleigh fading matrices and normalized to control the effective SNR. The inter-layer coupling matrices $\{\mathbf W_\ell\}_{\ell=2}^L$ were assumed known and generated from the Rayleigh-Sommerfeld model in~\eqref{eq:rayleigh_sommerfeld}. 
The additive noise follows a circularly symmetric complex Gaussian distribution, and the average SNR was swept over the range specified in each SNR-dependent experiment. For the proposed reduced-complexity SIM training protocol, only the size of the central layer was varied in the odd-layer case ($L=2M+1$), while the sizes of the two central layers were varied in the even-layer case ($L=2M$). Unless otherwise stated, the SIM training matrices were designed according to the procedure discussed in Section~\ref{sec:training_design}. The channel estimation results in all figures were averaged over at least $100$ Monte Carlo realizations. For clarity and reproducibility, the system parameters that remain fixed in each experiment are indicated at the top of the corresponding figure.

\subsection{Performance Evaluation}
To evaluate the performance of the proposed TenSIM estimators, we have computed the average NMSE of the reconstructed cascaded SIM channel across a set of SIM configurations. For the odd-layer SIM architecture, an estimate of the effective end-to-end channel induced by a metasurface configuration $\boldsymbol\phi^{(r)}$ applied to the central layer is defined as follows:
\begin{equation}\label{eq:effective_odd}
\widehat{\mathbf C}_{\mathrm{eff}}^{(r)}
\triangleq
\widehat{\mathbf Z}_G
\mathrm{Diag}(\boldsymbol\phi^{(r)})
\widehat{\mathbf Z}_H^T .
\end{equation}
Note that the actual cascaded channel $\mathbf C_{\mathrm{eff}}^{(r)}$ is invariant to the scaling ambiguities since they cancel out in (\ref{eq:effective_odd}) according to our previous discussion in Section \ref{sec:ident} (see eq.~(\ref{eq:scaling_amb})).
For the even-layer SIM architecture, the estimate of the effective channel induced by a metasurface configuration $\mathbf Q^{(r)}$ applied to the two middle layers can be written as follows:
\begin{equation}\label{eq:effective_even}
\widehat{\mathbf C}_{\mathrm{eff}}^{(r)}
=
\widehat{\mathbf Z}_G
\mathbf Q^{(r)}
\widehat{\mathbf Z}_H^T,
\end{equation}
where $\mathbf Q^{(r)}\in\mathbb{C}^{N\times N}$ denotes the effective configuration induced by the two trained central layers and corresponds to the combined phase vector $\bar{\boldsymbol\phi}^{(r)}=\boldsymbol\phi_M^{(r)}\otimes\boldsymbol\phi_{M+1}^{(r)}$ and the central propagation matrix $\mathbf W_{M+1}$, defined as $\mathrm{vec}(\mathbf Q^{(r)})=
\mathrm{Diag}(\bar{\boldsymbol\phi}^{(r)})\,\mathrm{vec}(\mathbf W_{M+1})$.
Similarly to the odd-layer case, the scaling ambiguity affecting $\mathbf Z_G$ and $\mathbf Z_H$ cancels out in the cascaded channel reconstruction.

\revblue{The reported NMSE values should be interpreted with respect to the estimated object shown in each figure. For the odd-layer PARAFAC branch, the primary channel-related estimates are the effective factors $\mathbf Z_G$ and $\mathbf Z_H$, and the main metric is the reconstructed cascaded channel NMSE in \eqref{eq:effective_odd}. For the even-layer Tucker branch, the figures report the NMSE of the physical channel matrices $\mathbf G$ and $\mathbf H$ when these factors are estimated directly, or the cascaded-channel NMSE in \eqref{eq:effective_even}. Note that in all cascaded-channel evaluations, the intrinsic scaling ambiguity has been automatically removed by the product structure.}

The NMSE performance metric is defined as follows:
\begin{equation}
\mathrm{NMSE}
\triangleq
\frac{1}{R}
\sum_{r=1}^{R}
\frac{
\left\|
\widehat{\mathbf C}_{\mathrm{eff}}^{(r)}
-
\mathbf C_{\mathrm{eff}}^{(r)}
\right\|_F^2
}{
\left\|
\mathbf C_{\mathrm{eff}}^{(r)}
\right\|_F^2
},
\end{equation}
where $R=100$ denotes the number of randomly generated SIM phase realizations used. This metric directly measures the accuracy with which the estimated channels reproduce the end-to-end SIM-induced channel response, while remaining invariant to the intrinsic scaling ambiguities.

\begin{figure}[t]
\centering
\includegraphics[width=\columnwidth]{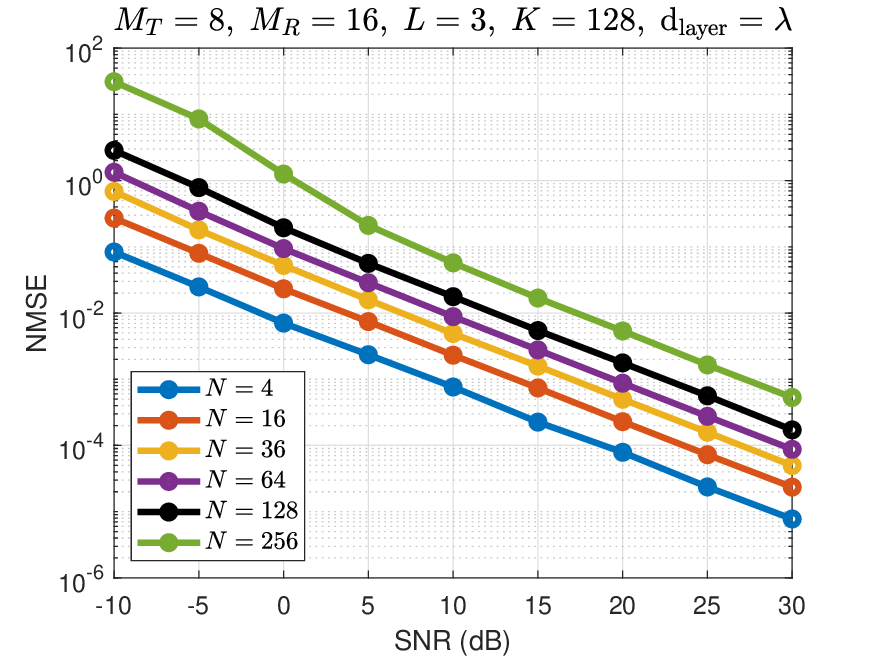}
\caption{Average cascaded-channel NMSE of the proposed TenSIM-PARAFAC versus SNR for the odd-layer SIM setting, illustrating the effect of the number of SIM elements, $N$, on the accuracy of effective channel reconstruction.}
\label{fig:nmse-snr-odd}
\end{figure}
Figure \ref{fig:nmse-snr-odd} demonstrates the effective-channel NMSE of the proposed TenSIM-PARAFAC estimator as a function of the SNR for different SIM sizes ($N=\{4,16,36,64,18,256\}$). As expected, the estimation accuracy improves monotonically with increasing SNR for all configurations. However, the performance gradually degrades as the SIM size $N$ increases. This behavior can be attributed to the increased dimensionality of the cascaded channel as the number of SIM elements grows, which increases the number of parameters involved in the PARAFAC factorization of the cascade. Consequently, larger SIM apertures require higher SNR or more training observations to achieve the same estimation accuracy. Nevertheless, the curves remain approximately parallel and decrease linearly over the tested SNR range, while the SIM dimension primarily affects the overall difficulty of estimation.

\begin{figure}[t]
\centering
\includegraphics[width=\columnwidth]{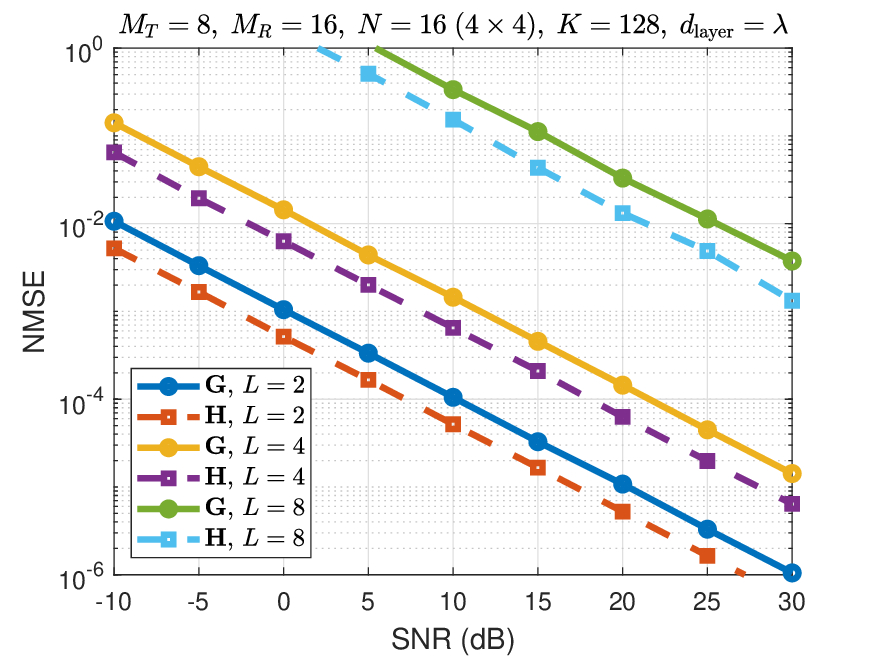}
\caption{Average channel-estimation NMSE of the proposed TenSIM-Tucker estimator versus SNR for the even-layer SIM setting, comparing different SIM depths $L$ under the considered training design.}
\label{fig:nmse-snr-even}
\end{figure}
Figure \ref{fig:nmse-snr-even} shows the NMSE of the estimated channel matrices $\mathbf{G}$ and $\mathbf{H}$ as a function of the SNR for different SIM depths considering the proposed TenSIM-Tucker approach in the even-layer configuration ($L=\{2,4,8\}$). As expected, the estimation accuracy improves with increasing SNR for all configurations. However, the NMSE systematically increases as the number of SIM layers $L$ grows. In particular, the case $L=2$ achieves the most accurate channel estimates, whereas deeper configurations ($L=4$ and $8$) exhibit progressively larger estimation errors. This behavior can be explained by the increased coupling introduced by deeper SIM cascades. As $L$ increases, the effective propagation between the Tx and Rx involves more inter-layer transformations, which are captured in the Tucker model by the matrices $\mathbf{T}_R$ and $\mathbf{T}_L$. These transformations tend to become more ill-conditioned as the cascade depth grows, amplifying noise and numerical errors in the LS updates of the ALS procedure. Consequently, although deeper SIM structures increase the modeling flexibility of the propagation channel~\cite{StylianopoulosOTA2026,SIM_MINN2026}, they also reduce the effective identifiability of the individual channel factors, leading to a gradual degradation in the estimation accuracy of $\mathbf{G}$ and $\mathbf{H}$. Nevertheless, an important advantage of the Tucker formulation in the even-layer case is that it enables the direct recovery of the individual channel matrices $\mathbf{G}$ and $\mathbf{H}$ up to a single scalar ambiguity, a property that is not available in the TenSIM-PARAFAC estimator.

\begin{figure}[t]
\centering
\includegraphics[width=\columnwidth]{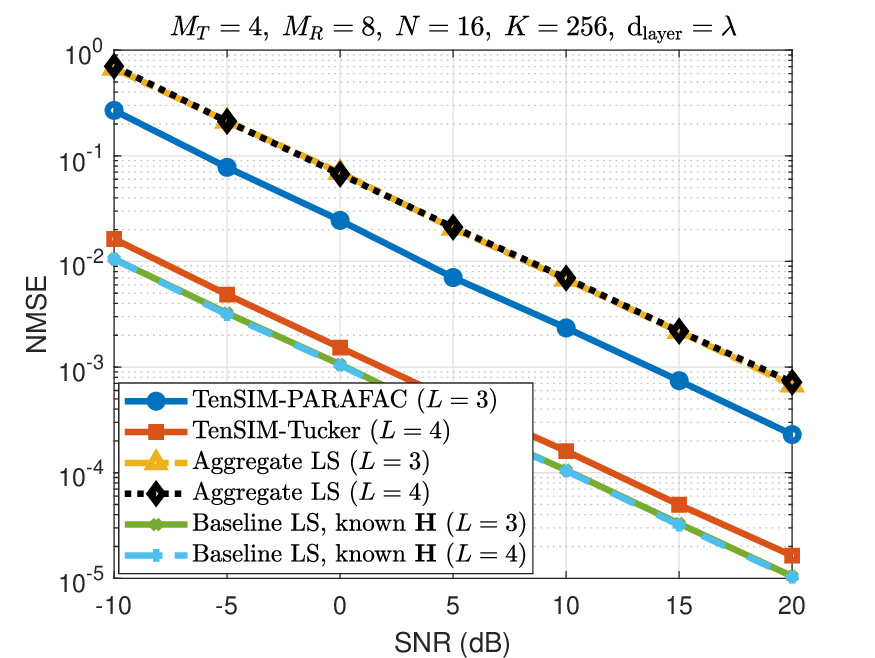}
\caption{Average cascaded-channel NMSE versus SNR for the proposed TenSIM-PARAFAC and TenSIM-Tucker estimators, compared with baseline LS and aggregate LS reference methods.}
\label{fig:nmse-snr-baselines}
\end{figure}
Figure~\ref{fig:nmse-snr-baselines} compares the NMSE performance of the two proposed TenSIM estimators against unstructured LS reference methods as a function of SNR. Two baseline methods are considered. The first (termed ``Baseline LS'') is the LS method proposed in \cite{CEWCL2024}, which assumes that the channel matrix $\mathbf{H}$ linking the Tx to the first SIM layer is known. Note that this is a simplifying assumption valid only when the SIM is placed very close to the Tx. The second reference method (termed ``Aggregate LS'') is the solution given by (\ref{eq:aggLS}) in Section \ref{subsec:3b}, which estimates the aggregate (composite) channel $\mathbf{T}$. 

The results in the figure show that the TenSIM-PARAFAC and TenSIM-Tucker estimators achieve lower NMSE than the aggregate LS baseline. Indeed, the aggregate LS estimator treats $\mathbf{T}=\mathbf H^T\!\otimes\!\mathbf G$ as a fully unstructured MIMO channel matrix, thereby ignoring both the Kronecker structure of the composite channel and the SIM-induced tensor structure of the training matrix. Consequently, its performance remains limited. In contrast, the TenSIM estimators exploit the resulting PARAFAC/Tucker tensor structure of the SIM cascade, leading to more accurate channel reconstruction.

An additional observation is that the aggregate LS baseline is viable in this comparison only because the training length satisfies the full column-rank requirement $K\geq N^2$ for the SIM training matrix $\mathbf S\in\mathbb C^{K\times N^2}$ in (\ref{eq:aggLS}). This requirement is already demanding for moderate SIM sizes and becomes prohibitive as the aperture grows. Therefore, Fig.~\ref{fig:nmse-snr-baselines} should be interpreted as a favorable reference case for the aggregate LS method rather than as a practically scalable solution. We also note that the baseline LS curve with the lowest NMSE relies on the assumed knowledge of the Tx--SIM channel matrix $\mathbf H$, which is idealized, especially when the SIM cascade is placed between the Tx and Rx and the individual subchannels are not directly observable. 

\begin{figure}[t]
\centering
\includegraphics[width=\columnwidth]{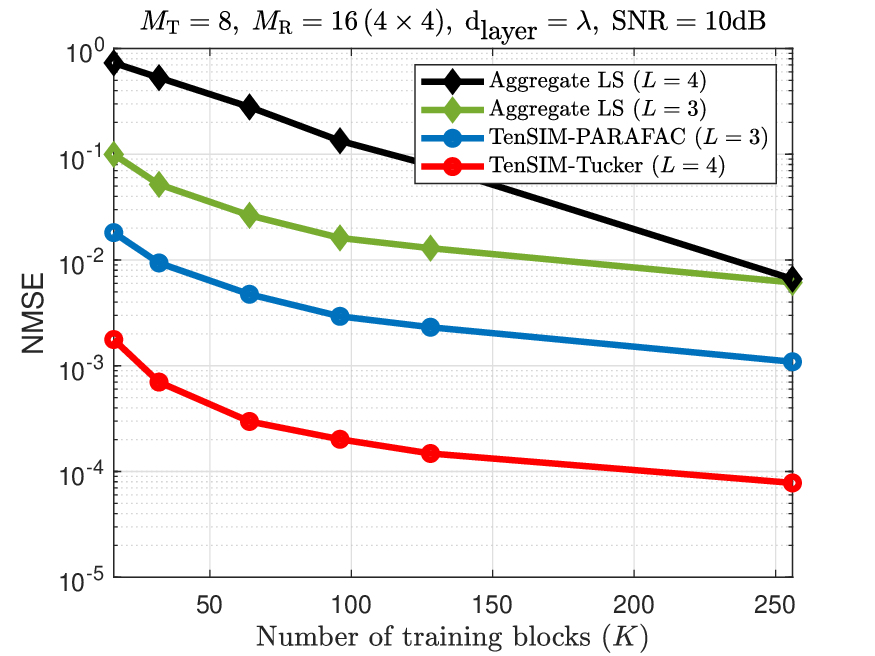}
\caption{Average cascaded-channel NMSE versus the number of training blocks $K$, comparing both TenSIM estimators with the aggregate LS benchmark.}
\label{fig:nmse-snr-baselines2}
\end{figure}

Figure~\ref{fig:nmse-snr-baselines2} further compares TenSIM with the aggregate LS baseline and illustrates the effect of the number of pilot training blocks $K$. It is shown that for small and moderate block sizes, the aggregate LS estimator performs poorly in terms of NMSE. In particular, when $K<N^2$, the resulting estimate is highly sensitive to noise and to the conditioning of the SIM training matrix. In contrast, both proposed TenSIM PARAFAC- and Tucker-based estimators achieve competitive or superior accuracy with a much smaller effective training burden. This is because our estimators leverage the tensor structure induced by the SIM cascade, rather than estimating the composite channel as a fully unstructured matrix.
\begin{figure}[t]
\centering
\includegraphics[width=\columnwidth]{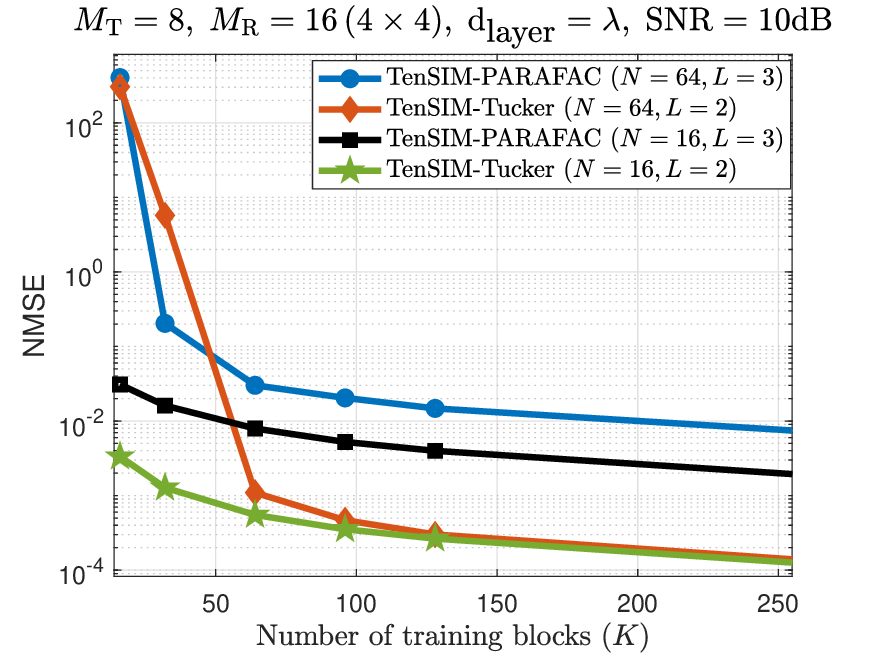}
\caption{Average cascaded-channel NMSE versus the number of pilot blocks $K$ for the proposed TenSIM-PARAFAC and TenSIM-Tucker estimators, showing the impact of training diversity and SIM size.}
\label{fig:nmse-k}
\end{figure}
Further investigation into the role of the training length $K$ in the performance of both proposed estimators is shown in Fig.~\ref{fig:nmse-k}. It is shown therein that for small values of $K$, both estimators exhibit relatively large errors due to the limited pilot resources available to estimate the channel factors relative to the SIM size. However, the NMSE decreases rapidly as $K$ increases. A notable observation is that the TenSIM-Tucker estimator achieves a significantly faster reduction in NMSE once $K$ becomes sufficiently large. In particular, for moderate-to-large values of $K$, the Tucker-based estimator outperforms the PARAFAC-based one. This behavior can be attributed to the richer structural model captured by the Tucker formulation, which accounts for the coupling induced by the SIM layers and enables a more accurate reconstruction of the effective cascaded channel. The influence of the SIM size is also evident from the results. For larger metasurface dimensions (e.g., $N=64$), both estimators require more training blocks to reach the same accuracy, reflecting the increased number of unknown parameters in the channel model. Nevertheless, the overall trends remain consistent: increasing $K$ improves estimation accuracy for both estimators, while the Tucker-based approach benefits more strongly from additional training data because it can exploit the multidimensional tensor structure of the SIM-incorporating channel. Interestingly, this behavior can also be interpreted in terms of the effective number of degrees of freedom of the two tensor models. The PARAFAC formulation used in the odd-layer case imposes a stronger separable structure on the cascaded channel, which reduces the number of free parameters to be estimated. Consequently, reliable estimates can be obtained with a relatively small number of training blocks. In contrast, the Tucker formulation, adopted for the even-layer case, allows a richer representation of the SIM-induced coupling through the core tensor. While this additional modeling flexibility improves reconstruction accuracy when sufficient training data are available, it also increases the number of parameters to be identified. As a result, the Tucker estimator requires more training blocks to fully exploit its structural advantages. Once this regime is reached, however, the Tucker model provides a more accurate representation of the cascaded channel, leading to the superior performance observed for large values of $K$.

\begin{figure}[t]
\centering
\includegraphics[width=\columnwidth]{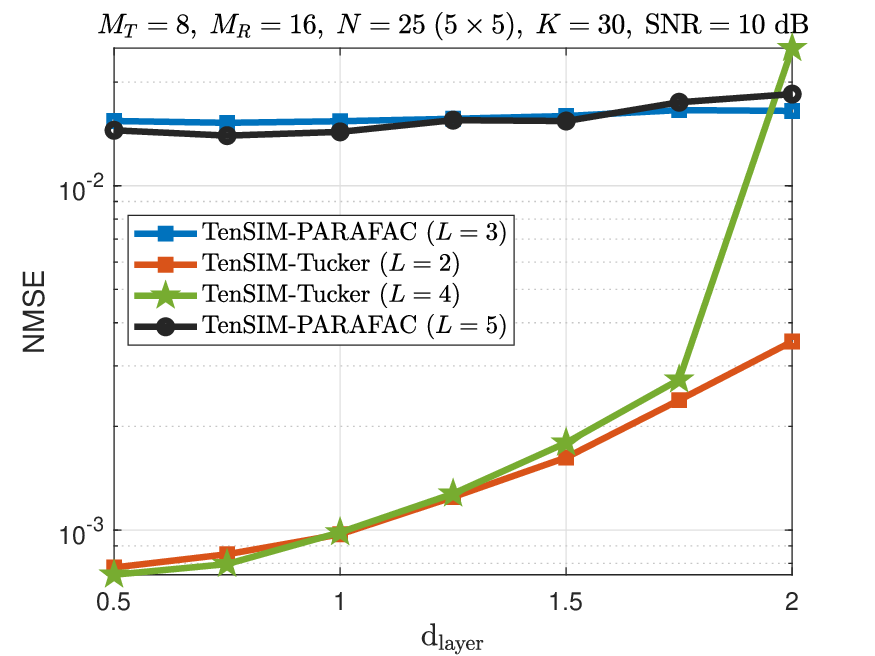}
\caption{Average cascaded-channel NMSE versus the inter-layer spacing $d_{\mathrm{layer}}$, highlighting the sensitivity of the proposed TenSIM-PARAFAC and TenSIM-Tucker estimators to the physical coupling between adjacent SIM layers.}
\label{fig:nmse-dlayer}
\end{figure}
Figure \ref{fig:nmse-dlayer} shows the effective cascaded-channel NMSE as a function of the inter-layer spacing $d_{\mathrm{layer}}$ for different numbers of SIM layers. Two main observations can be made. First, the performance of the TenSIM-PARAFAC estimator remains essentially insensitive to the inter-layer distance over the considered range. The NMSE curves for $L=3$ and $5$ exhibit only minor variations as $d_{\mathrm{layer}}$ increases, indicating that the PARAFAC-based model preserves a stable identifiability structure even as the coupling between adjacent metasurface layers weakens. In contrast, the TenSIM-Tucker estimator exhibits a degradation as the inter-layer spacing increases. For moderate values of $d_{\mathrm{layer}}$, the Tucker-based estimator achieves the best performance among the methods considered. However, as layer separation increases, estimation accuracy gradually deteriorates, with a pronounced performance loss at larger spacing values. This behavior can be explained by the fact that the Tucker formulation explicitly relies on the coupling structure induced by the SIM layers. As the distance between layers increases, the interaction between adjacent SIM elements weakens, thereby reducing the strength of the structured coupling captured by the Tucker core. As a result, the associated LS subproblems used to estimate the individual channels degrade the accuracy of channel reconstruction. Overall, the results in Fig.~\ref{fig:nmse-dlayer} shed light on the trade-off between the two TenSIM branches. While the Tucker-based estimator can exploit the strong structural coupling present in tightly stacked SIM configurations to achieve superior accuracy, its performance becomes more sensitive to the physical separation between layers. On the other hand, the PARAFAC-based estimator appears more robust to variations in inter-layer spacing, although it generally yields slightly higher NMSEs when layer coupling is strong.

\begin{figure}[t]
\centering
\includegraphics[width=\columnwidth]{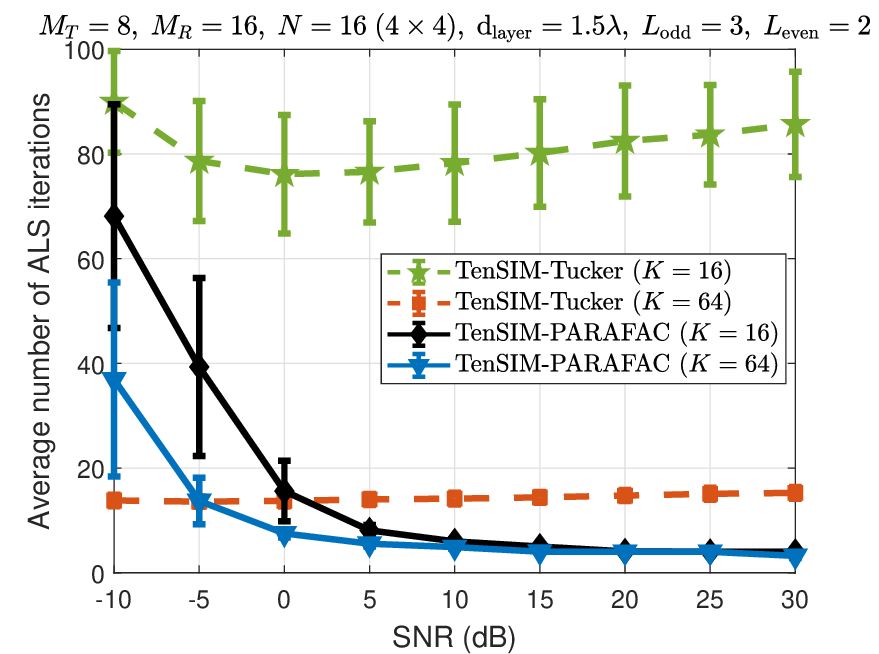}
\caption{Average number of ALS iterations versus SNR for the proposed TenSIM-PARAFAC and TenSIM-Tucker estimators, illustrating their different convergence behavior under the same stopping criterion.}
\label{fig:iters-snr}
\end{figure}
To study the convergence behavior of the proposed TenSIM framework, Fig.~\ref{fig:iters-snr} shows the average number of ALS iterations required for convergence as a function of the SNR for both the TenSIM-PARAFAC and TenSIM-Tucker estimators. The results reveal two markedly different convergence behaviors. The TenSIM-Tucker estimator exhibits an almost SNR-independent convergence pattern, requiring approximately a fixed number of ALS iterations across the entire SNR range. This behavior indicates that its convergence is primarily governed by the conditioning of the Tucker LS subproblems and by the structural coupling introduced by the SIM core matrix, rather than by the additive noise level. In particular, once the Tucker model structure is established, the ALS updates tend to stabilize around a similar convergence trajectory regardless of the noise variance. In contrast, the TenSIM-PARAFAC estimator shows a strong dependence on the SNR. At low SNR values, the number of iterations required for convergence is significantly larger, reflecting the increased difficulty in accurately estimating the factor matrices in the presence of strong noise. As the SNR increases, the number of iterations decreases substantially. These results highlight an important algorithmic distinction between the two estimators. While the Tucker-based approach exhibits a more stable but relatively fixed convergence behavior, the PARAFAC-based estimator benefits more directly from improved pilot signal quality, resulting in faster convergence at moderate-to-high SNR levels. This difference complements the complexity analysis presented earlier and provides additional insight into the convergence characteristics of our TenSIM algorithms.

\begin{figure}[t]
\centering
\includegraphics[width=\columnwidth]{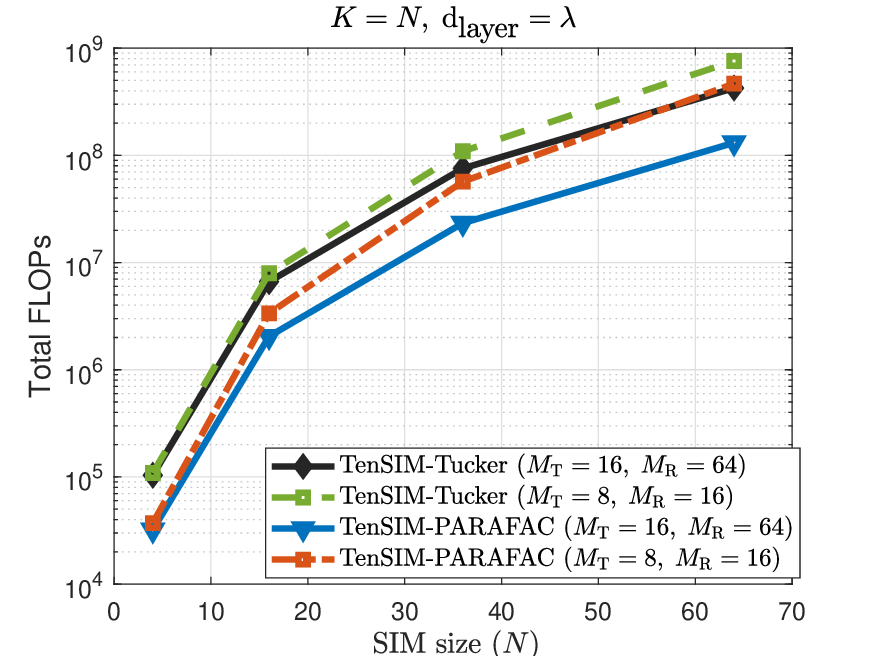}
\caption{Computational complexity of the proposed TenSIM-PARAFAC and TenSIM-Tucker estimators versus the number of SIM elements $N$, reported in floating-point operations for two antenna configurations.}
\label{fig:complexity-flops}
\end{figure}

In Figure \ref{fig:complexity-flops}, we compare the computational complexity of the proposed TenSIM-PARAFAC and TenSIM-Tucker estimators as a function of the SIM size $N$, measured in terms of the total number of floating-point operations (FLOPs). The results are shown for two antenna configurations, $(M_T,M_R)=(8,16)$ and $(M_T,M_R)=(16,64)$, assuming $K=N$ training blocks. As predicted by the complexity analysis in Section~V-E, the computational cost of both estimators grows rapidly with $N$, reflecting the quadratic dependence of the tensor construction and LS updates on the SIM dimension. However, the TenSIM-Tucker estimator consistently exhibits a higher computational cost than the TenSIM-PARAFAC one. This difference originates from the additional $N^4$ term introduced by the SIM core coupling matrix in the Tucker-based formulation, which is absent in the PARAFAC model. As a result, the complexity gap between the two estimators increases with $N$, highlighting the more favorable scaling of the PARAFAC-based formulation for large SIM arrays. The antenna configuration primarily shifts the curves vertically, since the dominant LS operations scale proportionally with $(M_T+M_R)$. Nevertheless, the relative behavior of the two estimators remains essentially unchanged, indicating that the SIM dimension is the main factor governing the computational complexity. These results corroborate the analytical complexity expressions derived in Section~V-E and confirm the improved scalability of the TenSIM-PARAFAC estimator. In particular, the results reflect the different asymptotic scalings of the two estimators: while the dominant computational cost of the TenSIM-PARAFAC estimator grows approximately as $O(N^3)$, the TenSIM-Tucker estimator exhibits a steeper $O(N^4)$ growth due to the additional SIM core coupling operations, indicating that the SIM dimension is the main factor governing the computational complexity.

\begin{figure}[t]
\centering
\includegraphics[width=\columnwidth]{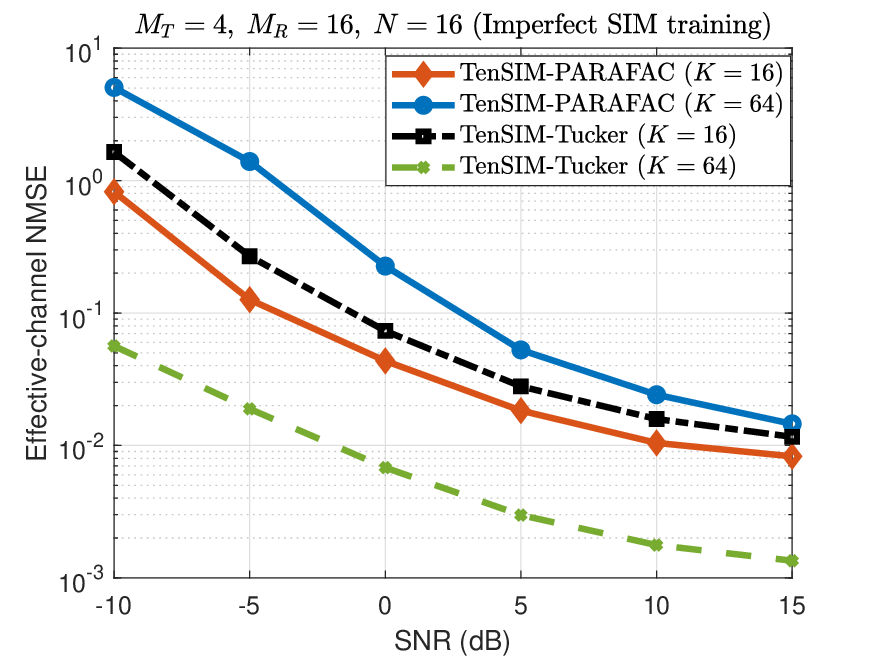}
\caption{Effective-channel NMSE versus SNR in an imperfect training scenario where the unknown SIM training matrix is initialized from its nominal value and then jointly updated with the channel matrices for different pilot lengths $K$.}
\label{fig:tensim_blind}
\end{figure}
Finally, the performance of an imperfect training scenario (the blind-SIM variant discussed in Section~\ref{sec:blind-SIM-variant}) in which the SIM training matrix is not perfectly known at the Rx is investigated in Fig.~\ref{fig:tensim_blind}. This situation may arise in practical deployments due to hardware impairments, such as phase noise, phase quantization errors, calibration mismatch, or other deviations between the programmed and the actually implemented SIM coefficients. To emulate this effect, taking the odd-layer TenSIM-PARAFAC case as a representative example, the SIM training matrix is modeled as follows:
\begin{equation}
\mathbf \Phi = \mathbf \Phi_0 + \sqrt{\sigma}\,\mathbf E,
\end{equation}
where $\mathbf \Phi_0$ denotes the nominal SIM training matrix known at the Rx, $\mathbf E$ is a random perturbation matrix with complex-exponential entries and uniformly drawn phases, and $\sigma=10^{-2}$ controls the impairment level. In this imperfect SIM configuration, TenSIM is initialized with the nominal matrix $\mathbf \Phi_0$ (as discussed in Section~\ref{sec:training_design}), but the actual training factor $\mathbf \Phi$ is then estimated jointly with the channel matrices. This corresponds to activating the optional SIM training-factor updates in Algorithm~\ref{alg:unified_sim_als}, yielding a full three-step ALS routine that alternates between the channel factors and the now-unknown SIM training matrix. The results in Fig.~\ref{fig:tensim_blind} show that the effective-channel NMSE decreases with increasing SNR for both TenSIM methods, confirming that the proposed solutions remain useful under training uncertainty. Increasing the number of pilot blocks from $K=16$ to $64$ improves the performance of both TenSIM-PARAFAC and TenSIM-Tucker due to additional training diversity. The results also showcase that the two TenSIM methods respond differently. More specifically, TenSIM-PARAFAC benefits from its simpler separable structure, whereas TenSIM-Tucker exploits the richer coupling model when sufficient training diversity is available. Overall, Fig.~\ref{fig:tensim_blind} confirms that the proposed TenSIM framework can accommodate imperfect training conditions and highlights the need for additional pilot blocks to compensate for the additional uncertainty introduced by imperfect SIM training. 

\subsection{Discussion} Our results highlight a fundamental trade-off. Increasing the number of SIM layers enhances the electromagnetic wave-manipulation capability of the cascade by enabling multiple successive transformations of the propagating field. However, the same physical mechanisms that increase wave-control flexibility also introduce stronger interlayer coupling, which is reflected mathematically in the propagation matrix $\mathbf{W}_{M+1}$ of the Tucker core tensor. As the SIM aperture grows or the distance between layers increases, this operator tends to exhibit a more uneven singular-value spectrum, leading to less well-conditioned updates during LS iterations. In contrast, the odd-layer architecture effectively absorbs the inter-layer propagation operators into the effective channel factors $\mathbf{Z}_G$ and $\mathbf{Z}_H$, resulting in a PARAFAC model with an identity core tensor. This structural simplification removes the direct influence of the propagation operator from the ALS updates and helps explain the greater numerical robustness observed in the odd-layer estimator across the reported experiments. Overall, these results suggest that the choice between odd- and even-layer SIM configurations should not be guided solely by electromagnetic design considerations but should also account for the associated channel estimation complexity and numerical stability. In particular, for large SIM apertures or larger inter-layer spacings, the odd-layer configuration may offer a more favorable trade-off between wave-control flexibility and estimation reliability.

\subsection{\revblue{Limitations and Practical Considerations}}
\revblue{TenSIM is most effective when the SIM geometry is calibrated, the inter-layer coupling matrices are known or accurately modeled, the channels remain quasi-static during the pilot interval, and the central-layer training factors provide sufficient rank diversity. TenSIM-PARAFAC is generally preferable for large SIM apertures, fast tracking, or larger inter-layer spacings, because its PARAFAC model has an identity core and lower computational cost. However, it estimates effective factors rather than separately identifying all physical subchannels. TenSIM-Tucker can provide more accurate channel reconstruction when the two central layers are strongly coupled, and enough training blocks are available, but it is more sensitive to the conditioning of $\mathbf W_{M+1}$ and its complexity grows more rapidly with $N$. Both branches can degrade under limited training length, nearly rank-deficient factors, imperfect coupling calibration, phase noise, quantized phase shifts, switching latency, or unknown SIM phase perturbations. The blind/imperfect-training variant mitigates part of this uncertainty by updating the training factor jointly with the channel factors, but it requires additional pilot diversity and is more sensitive to initialization.}

\section{Conclusions}
This paper presented TenSIM, a tensor-based framework for channel estimation in SIM-assisted MIMO communication systems under reduced-complexity training protocols. By exploiting the SIM cascade structure, TenSIM unifies two distinct multilinear models depending on the parity of the number of metasurface layers: a PARAFAC model for odd-layer SIM and a Tucker model for even-layer SIM. These formulations expose the latent factors associated with the Tx-side channel, the Rx-side channel, and the training-dependent SIM coefficients in a compact, structured manner. Building on these models, we derived TenSIM ALS estimators tailored to the odd- and even-layer cases and established their identifiability and computational properties. Our numerical investigations corroborated our TenSIM modeling framework and showcased the practical trade-offs between the two SIM architectures. In particular, the NMSE, convergence, and complexity trends demonstrated that both TenSIM branches benefit from increased training diversity and improved SNR, while the TenSIM-PARAFAC branch consistently offered a more favorable balance between estimation accuracy, numerical robustness, and computational cost. By contrast, the TenSIM-Tucker branch remained the appropriate structured model when two central layers need to be trained, but its Tucker core introduced greater sensitivity to inter-layer coupling and a steeper growth in complexity with increasing SIM size. These observations highlight that the choice between odd- and even-layer SIM configurations within TenSIM should be guided by the resulting estimation stability and implementation cost. Overall, TenSIM connects electromagnetic-layer programmability to identifiable communication-layer models, while enabling decoupled estimates of the involved channels from a common tensor-estimation viewpoint. Future work includes data-aided and structure-exploiting channel estimation methods, robust designs under model mismatch, and implementations that leverage sparsity to further reduce training overhead and computational complexity.

\renewcommand\baselinestretch{.95}

\end{document}